\newtheorem{thm}{Theorem}[section]
\newtheorem{cor}[thm]{Corollary}
\newtheorem{lem}[thm]{Lemma}
\newtheorem{prop}[thm]{Proposition}
\theoremstyle{definition}
\newtheorem{defn}[thm]{Definition}
\theoremstyle{remark}
\newtheorem{rem}[thm]{Remark}
\numberwithin{equation}{section}
\newcommand{\abs}[1]{\left\vert#1\right\vert}
\newcommand{\set}[1]{\left\{#1\right\}}
\newcommand{\Real}{\mathbb R}
\newcommand{\Natural}{\mathbb N}
\newcommand{\such}{\ | \ }
\newcommand{\prob}{\mathbb{P}}
\newcommand{\qprob}{\mathbb{Q}}
\newcommand{\expec}{\mathbb{E}}
\newcommand{\expecp}{\expec_\prob}
\newcommand{\expecq}{\expec_\qprob}
\newcommand{\var}{\mathbb{V} \mathsf{ar}}
\newcommand{\cov}{\mathbb{C} \mathsf{ov}}
\newcommand{\F}{\mathcal{F}}
\newcommand{\ud}{\mathrm d}
\newcommand{\inner}[2]{\left \langle #1 , #2 \right \rangle}
\newcommand{\argmax}{\operatorname{argmax}}
\newcommand{\secu}{S}
\newcommand{\pb}{p^{\mathsf{r}}}
\newcommand{\qb}{q^{\mathsf{r}}}
\newcommand{\pare}[1]{\left(#1\right)}
\newcommand{\bra}[1]{\left[#1\right]}
\newcommand{\dbra}[1]{[\kern-0.15em[ #1 ]\kern-0.15em]}
\newcommand{\dbraco}[1]{[\kern-0.15em[ #1 [\kern-0.15em[}
\newcommand{\dbraoc}[1]{]\kern-0.15em] #1 ]\kern-0.15em]}
\newcommand{\Lb}{\mathbb{L}}
\newcommand{\U}{\mathbb{U}}
\newcommand{\V}{\mathbb{V}}
\newcommand{\dfn}{\mathrel{\mathop:}=}
\newcommand{\iii}{i \in I}
\newcommand{\jii}{j \in I}
\newcommand{\iij}{i \in J}
\newcommand{\sumi}{\sum_{\iii}}
\newcommand{\rto}{\theta}
\newcommand{\rtoi}{\theta_i}
\newcommand{\rtoni}{\theta_{-i}}
\newcommand{\kb}{k^{\mathsf{r}}}
\newcommand{\rb}{\rtoi^{\mathsf{r}}}
\newcommand{\rg}{\theta^*}
\newcommand{\kg}{k^*}
\newcommand{\rgi}{\rtoi^{*}}
\newcommand{\rgni}{\rtoni^{*}}
\newcommand{\pg}{p^{*}}
\newcommand{\qgi}{q_i^{*}}
\newcommand{\qg}{q^{*}}
\newcommand{\DU}{\mathsf{DU}}
\newcommand{\pp}{\widehat{p}}
\newcommand{\pq}{\widehat{q}}
\begin{document}

\title[Effective risk aversion in thin risk-sharing markets]{Effective risk aversion in thin risk-sharing markets}%

\author{Michail Anthropelos}
\address{Michail Anthropelos, Department of Banking and Financial Management, University of Piraeus.}
\email{anthropel@unipi.gr}

\author{Constantinos Kardaras}
\address{Constantinos Kardaras, Statistics Department, London School of Economics.}
\email{k.kardaras@lse.ac.uk}

\author{Georgios Vichos}
\address{Georgios Vichos, Statistics Department, London School of Economics.}
\email{g.vichos@lse.ac.uk}

\thanks{The authors are grateful to Jak\v{s}a Cvitani\'{c} for discussions that inspired this work. Many thanks also go to Andrea Buffa, Paolo Guasoni, Scott Robertson and the participants of academic seminars in Dublin City University, University of Michigan and the Questrom School of Business of Boston University.}
\keywords{Effective risk aversion, Bayesian Nash equilibrium, noncompetitive risk sharing, thin markets, price impact}

\date{\today}%
\begin{abstract}
We consider thin incomplete financial markets, where traders with heterogeneous preferences and risk exposures have motive to behave strategically regarding the demand schedules they submit, thereby impacting prices and allocations. We argue that traders relatively more exposed to market risk tend to submit more elastic demand functions. Noncompetitive equilibrium prices and allocations result as an outcome of a game among traders. General sufficient conditions for existence and uniqueness of such equilibrium are provided, with an extensive analysis of two-trader transactions. Even though strategic behaviour causes inefficient social allocations, traders with sufficiently high risk tolerance and/or large initial exposure to market risk obtain more utility gain in the noncompetitive equilibrium, when compared to the competitive one.       
\end{abstract}

\maketitle


\section*{Introduction}

It has been widely recognised that many financial markets are dominated by a relatively small number of large investors, whose actions heavily influence prices and allocations of tradeable securities---see, among others, discussions in \cite{BluKei12, GibSinYer03, RosWer15}. While such market impact has been observed even in large exchanges like NYSE (see \cite{KeiMad95, KeiMad96, MadChe97} and the more recent empirical study \cite{AllMatMat17}), it is especially in over-the-counter (OTC) transactions that the assumption of a competitive market structure is problematic. The majority of OTC markets involve relatively few participants; therefore, even if all information is public, equilibrium forms in a noncompetitive manner. Such financial markets with an oligopolistic structure are usually characterised as \emph{thin} (see \cite{RosWer08} for a related reviewing discussion).

The main reason for trading between risk averse traders with common information and beliefs is the heterogeneity of their endowments---see, for instance, related discussion in \cite{BarElKar05, JouSchaTou08}. Trading securities that are correlated with traders' endowments may be mutually beneficial in sharing the traders' risky positions---see, among others, \cite{AnthZit10, Rob17}. In a standard Walrasian uniform-price auction model, traders submit demand schedules on the tradeable securities and the market clears at the prices resulting in zero aggregate submitted demand; and since demand depends on traders' characteristics such as their risk exposure and risk aversion, the same is true for the equilibrium prices and allocation. Whereas traders' exposures to market risk (i.e., uncertainty in the tradeable securities) may be considered as public knowledge, their risk aversion is \emph{subjective} and should be regarded as private information. In the realm of thin financial markets, traders may have motive to act strategically and submit demand schedules with \emph{different} elasticity than the one reflecting their risk aversion. The goal of this paper is to model such strategic behaviour and highlight some of its economic insights.

\subsection*{Model description and main contributions}
We develop a model of a one-shot transaction on a given collection of risky tradeable securities, under common information on the probabilistic nature of their payoffs. Traders possess and exploit a potential to impact the market's equilibrium. We adopt the setting of CARA preferences and normally distributed payoffs, also appearing in \cite{Kyl89, RosWer15, Vay99}, with traders assumed heterogeneous with respect to risk tolerance (defined as the reciprocal of risk aversion) and initial risky positions. In contrast to the majority of related literature, we do not assume that traders' endowments belong to the span of the tradeable securities, leading to market incompleteness.   

Similarly to the models in \cite{Kyl89, Vay99, Viv11}, the market operates as a uniform-price auction where traders submit demand functions on the tradeable securities, with equilibrium occurring at the price vector that clears the market. When traders do not act strategically, the market structure is competitive and the equilibrium price-allocation is induced by traders' \emph{true} demand functions. However, as has been pointed out previously, such competitive structure is not suitable for thin markets, and the way traders behave depends in principle on the risk exposure and risk tolerance of their counter-parties. In a CARA-normal setting, demand functions are linear with downward slope and their elasticities coinciding with the traders' risk tolerance. Traders recognise their ability to influence the equilibrium transaction, and may submit demand with different elasticity than the one reflecting their risk tolerance.  We formulate a best-response problem, according to which traders submit demand functions aiming at individual utility maximisation, with strategic choices parametrised by the elasticity of the submitted demand. This forms a noncompetitive market scheme, where the Bayesian Nash equilibrium is the fixed point of traders' best responses.

In any non-trivial case, traders have motive to submit demand with different elasticity than their risk tolerance. The main determining factor of traders' best response is their pre-transaction  \emph{projected beta}, defined as the beta (in terms of the Capital Asset Pricing Model) of the projection of the trader's risky position onto the linear space generated by the securities. In the special case where the traders' positions belong to the span of tradeable securities, 
projected and actual betas coincide. Following classical literature, traders' projected betas (hereafter simply called betas) measure their exposure to market risk. In terms of risk sharing, we distinguish traders to those who increase or decrease their beta through the transaction.

It is shown that traders submit demand corresponding to higher risk tolerance if and only if they reduce their market risk exposure through trading. The economic insight of this strategic behaviour is simple: traders with relatively higher initial exposure to market risk pay a \textit{risk premium} to their counter-parties in order to reduce their beta. Submitting more elastic demand has two main effects. Firstly, the post-transaction reduction of beta is smaller, since more elastic demand implies higher relative risk tolerance and hence higher post-transaction exposure to market risk, as the trader appears willing to keep a more risky position. Secondly, the risk premium that is paid is also lower. As it turns out, the effect of premium reduction overtakes the sub-optimal reduction of market risk exposure. In order to obtain intuition on this, consider the impact of the other traders' status on an individual trader's actions. Large pre-transaction beta for a specific trader implies low aggregate beta for other traders. Acting in a more risk tolerant way, by submitting more elastic demand, a trader essentially exploits this low aggregate exposure to market risk of the counter-parties, and in fact decreases the premium that they ask in order to undertake more market risk. 

On the contrary, traders who undertake market risk in exchange for a risk premium, i.e., those with low pre-transaction beta, have motive to submit less elastic demand. Not only does such a strategy result in less undertaken market risk, it also takes advantage of the large aggregate counter-parties' beta, increasing the premium received in order to offset their demand.   

Continuing this line of argument, traders with overexposed to market risk, with pre-transaction beta sufficiently higher than one, tend to behave as risk neutral, even though their actual risk aversion parameter is strictly positive. In such a case, the trader takes over the whole market risk, reducing the post-transaction beta of their position to one. At the same time, the other traders are willing to offset such transaction since it makes their post-transaction beta equal to zero (i.e., becoming market-neutral); for this reason, they reduce the required risk premium. On the other hand, traders with pre-transaction beta less than or equal to $-1$ submit extremely inelastic demand functions, implying zero risk tolerance, appearing willing to become market neutral. Again, other traders are eager to offset the transaction, since at this regime their aggregate pre-transaction beta is relatively large, and selling market risk is a very effective hedging transaction.

We discover two regimes of noncompetitive equilibrium. When one of the trader's pre-transaction beta is sufficiently large, there exists a unique linear equilibrium which is \emph{extreme}, in the sense that the market-overexposed trader behaves as being risk neutral and at equilibrium undertakes all market risk. Such extreme Nash equilibrium results in market-neutral portfolios for all other traders, while securities are priced in a risk-neutral manner. In any other ``\emph{non-extreme}'' case, noncompetitive equilibria solve a coupled system of quadratic equations, which admits a unique solution under the mild---and rather realistic---assumption that at most one of the traders may have pre-transaction beta greater than one. We provide an efficient constructive proof of the latter fact, which can be used to numerically obtain the unique linear equilibrium given an arbitrary number of traders. 

The two-trader case is of special interest, mainly because the large majority of risk-sharing transactions are bilateral between large institutions and/or their clients or brokers; related discussions and statistics are provided in \cite{BabHu16, Bab16, DufSchVui15, Zaw13, HedMad15}. We obtain explicit expressions for two-trader price-allocation noncompetitive equilibria, which allow us to analyse further the model's economic insight. Noncompetitive and competitive equilibria coincide if and only if the competitive equilibrium transaction is null, in that the initial allocation is already Pareto-optimal. In any other case, for both traders the elasticity of submitted demands in such thin market deviates from the one utilising their risk tolerances. As emphasised above, the crucial factor is the traders' pre-transaction beta. For non-extreme equilibria we have the following synoptic relationship: 
\begin{center}
\emph{true elasticity $<$ equilibrium elasticity $\quad\Leftrightarrow\quad$ post-transaction beta $<$ pre-transaction beta}.
\end{center}
Even if traders have common risk tolerance, deviations between their endowment will make them behave heterogeneously. For a trader with higher (resp., lower) beta, who reduces (resp., increases) market risk through the transaction, the equilibrium elasticity reflects more (resp., less) risk tolerance. One could argue, therefore, that in thin financial markets the assumption of effectively homogeneous risk-averse traders is problematic, since it essentially implies that traders ignore their ability to impact the transaction. 

In the context of strategic behaviour, equilibrium prices and allocations are generally impacted. In the two-trader case, the volume in noncompetitive equilibrium is always lower than in the competitive one. More precisely, it is shown that the post-transaction beta after Nash equilibrium is---interestingly enough---the midpoint between the trader's pre-transaction beta and the beta after the competitive transaction. This implies a loss of social efficiency, in the sense that the total utility in noncompetitive equilibrium is reduced when compared to the competitive one. However, such loss of total utility does not always transfer to the individual level. In fact, it follows from the analysis of the bilateral game that the noncompetitive equilibrium is beneficial in terms of utility gain for two types of traders: those with sufficiently high pre-transaction beta, and those with sufficiently high risk tolerance. Such findings in noncompetitive markets are consistent with results in \cite{Anth17} and \cite{AnthKar17}. (A result in that spirit also appears in \cite{MalRos14}; namely it is shown that, when the market is centralised, less risk averse agents have greater price impact.)

As a final point, and as mentioned above, our model allows for incompleteness, and we study its effect in noncompetitive risk-sharing transaction. Based on the two-trader game, we show that traders who benefit from the noncompetitive market setting (i.e., those with high risk tolerance and/or high exposure to market risk) have their utility gains reduced by the fact that endowments are not securitised, highlighting the importance of completeness especially for large traders that prefer thin markets for sharing risk.

\subsection*{Connections with related literature}

The present paper contributes to the large literature on imperfectly competitive financial markets. Based on the seminal works on Nash equilibrium in supply/demand functions of \cite{KleMey89} and \cite{Kyl89}, most models of noncompetitive markets consider strategically acting agents, whose set of choices corresponds to demand schedules submitted to the transaction. Frequently, the departure from competitive structure stems from informational asymmetry; such is the case in \cite{Bac92, BlaCaoWil00, Kyl89, KylObiWan14}, where agents are categorised as informed, uniformed and noisy. Even without existing risky positions, asymmetric information gives rise to mutually beneficial trading opportunities among traders, who submit demand schedules based on the responses of their counter-parties. Another potential source of noncompetitiveness comes via exogenously imposing asymmetry on the bargaining power among market participants. Bilateral OTC transactions between agents with different bargaining power are modelled in \cite{DuffGarPed07}; in \cite{LiuWan16}, it is market makers who possess market power and optimally adjust bid-ask spreads based on submitted orders by informed and uniformed investors. (See the references in \cite{LiuWan16} for alternative models of strategic market makers.) Exogenously imposed differences on market power are also present in \cite{BruPed05}, where traders are divided into price-takers and predatory ones, the latter strategically exploiting the liquidity needs of their counter-parties.      

In contrast to the above, our model assumes symmetry for traders' market power; noncompetitiveness stems solely from the fact there is a small number of acting traders, each of whom can buy or sell the tradeable securities and has the ability to affect the risk-sharing transaction.\footnote{Symmetric games in an oligopolistic market of goods (rather than securities with stochastic payoffs) have also been studied in the seminal work of \cite{KleMey89} and in the more recent papers \cite{Viv11} and \cite{Wer11}. The main structural difference between these market models and ours is that players therein (i.e., firms) can take only the seller's side, while the buyer's side (i.e., the demand for the goods) is essentially exogenous. Additionally, the fact that the tradeable asset is a good creates further technical and economic deviations---for instance, the role of risk exposure is essentially played by the cost function, the price can not be negative, etc. The model in \cite{KleMey89} imposes randomness on demand, whereas \cite{Viv11} considers random suppliers' cost and private information status. On the other hand, the model of market power in \cite{Wer11} is based on the same setting of price impact as in \cite{RosWer15} and \cite{MalRos14}.} The market here is assumed to be oligopolistic, without any form of exogenous frictions or asymmetries.

Market models close to ours considered by other authors include \cite{MalRos14, RosWer15, Vay99}. In \cite{MalRos14, RosWer15, Vay99}, and similarly to the present work, traders submit demand in a noncompetitive market setting by taking into account the impact of their orders on the equilibrium. The main difference with our demand-game, when compared to the one-shot market of \cite{RosWer15,Vay99} and the centralised market of \cite{MalRos14}, is the set of traders' strategic choices. More precisely, in these works a trader's price impact is identified as the slope of the submitted aggregate demand of the rest of the traders. Traders estimate (correctly at equilibrium) their price impact and respond by submitting demand schedules aiming for maximising their own utility. In particular, the set of strategic choices consists of the slope of the submitted demand, and equilibrium arises as the fixed point of the traders' price impacts. In our model, we keep the linear equilibrium structure of demand functions and parametrise the set of traders' strategical choice to the submitted \textit{elasticity}, and equilibrium is formed simply at the price where aggregate submitted demand is zero. In this way, each trader responds to the whole demand function of other traders, and not just the slope. This is a crucial trading feature motivated by the benefits of risk sharing, since the intercept point of the demand function corresponds to the traders' exposure to market risk (the correlation of traders' endowment with the tradeable assets). The difference becomes pronounced in the very special case of a single tradeable security, where traders' price impacts of \cite{RosWer15} and \cite{MalRos14} can be seen as the reciprocal of their risk aversion. In \cite{RosWer15}, the so-called equilibrium \textit{effective risk aversion}---that is, the risk aversion that is reflected by the equilibrium submitted demands---depends only on the number of traders (as well as a couple of other quantities that we do not use in our model: interest rate and number of allowable trades until the end of each trading round). In particular, heterogeneity of initial risky endowments is not addressed: even with different initial positions at each period, traders do not take into account their counter-parties' exposure to market risk. Our demand-game may be more appropriate for thin risk-sharing transactions, since it endogenously highlights the importance of traders' initial positions for their strategic behaviour.

Another important trait of our model is that it can be applied to the practically important two-trader case, while the models of  \cite{RosWer15}, \cite{MalRos14} and \cite{Vay99} are ill-posed for bilateral transactions. As already mentioned, bilateral transactions are significant part of thin market models, since the majority of the OTC risk-sharing transactions consist of only two counter-parties. Existence of a two-agent Bayesian Nash equilibrium exists under mild assumptions in the model of \cite{RosWer12}; however, agents there have private valuations on the tradeable securities. 

Further to what was pointed out above, our model allows market incompleteness: tradeable securities do not necessarily span the traders' endowments. We are thus able to generalise the discussion on thin markets and deviations of noncompetitive equilibria from competitive ones in the more realistic framework where traders' endowments are nor securitised  neither replicable. 

Finally, models of thin risk-sharing markets, albeit with different set of strategic choices, have been considered in \cite{Anth17} and \cite{AnthKar17}. In \cite{Anth17}, traders choose the \emph{endowment} submitted for sharing, and a game on agents' linear demand is formed; in contrast with the present paper, agents in \cite{Anth17} choose the intercept of the demand function instead of its elasticity. In \cite{AnthKar17}, traders strategically submit \emph{probabilistic beliefs}, and the model is ``inefficiently complete'', as securities are endogenously designed by heterogeneous traders in order to share their risky endowments.

\subsection*{Structure of the paper}

Section \ref{sec:setup} introduces the market model and competitive equilibrium, where traders do not act strategically. Section \ref{sec:best_response} introduces, solves and discusses the individual trader's best response problem. Noncompetitive equilibrium is introduced in Section \ref{sec.Nash}; general conditions ensuring existence and uniqueness of Nash equilibrium are provided in \S \ref{subsec:More_than_two}, conditions for so-called extreme equilibrium are addressed in \S \ref{subsec:extreme_case}. The two-trader game is extensively analysed in Section \ref{sec.bilateral}. The proof of the main Theorem \ref{thm:ex_and_un} is presented in Appendix \ref{sec:appe}.

\section{Model Set-Up}\label{sec:setup}

We work on a probability space $(\Omega, \, \mathcal{F}, \, \prob)$, and denote by $\Lb^0 \equiv \Lb^0(\Omega, \mathcal{F},\prob)$ the class of all $\F$-measurable random variables, identified modulo $\prob$-a.s.~equality.

\subsection{Agents and preferences}

We consider a market of $n + 1$ economic traders, where $n \in \Natural = \set{1, 2, \ldots}$; for concreteness, define the index set $I = \set{0, \ldots, n}$. Traders are assumed risk averse and derive utility only from future consumption of a num\'eraire at the end of a single period, where all uncertainty is resolved. To simplify the analysis we assume that all considered security payoffs are expressed in units of the num\'eraire, which implies that future deterministic amounts have the same present value for
the traders. Each trader $\iii$ carries a risky future payoff in units of the num\'eraire, which is called (random) \emph{endowment}, and denoted by $E_i$. The endowment $E_i \in \Lb^0$ denotes the existing risky portfolio of trader $\iii$, and is not necessarily securitised or tradeable. We define the aggregate endowment $E_I \dfn \sumi E_i$, and set $E \equiv (E_i)_{\iii}$ to be the vector of traders' endowments.

The preference structure of traders is numerically represented by the functionals
\begin{equation}\label{eq:utility_functional}
\Lb^0 \ni X \mapsto \U_i(X) \dfn - \delta_i \log \expec\bra{\exp \pare{- X /
		\delta_i}} \in [- \infty, \infty),
\end{equation}
where $\delta_i \in (0, \infty)$ is the risk tolerance of trader $\iii$. Note that $\U_i(X)$ corresponds to the certainty equivalent of potential future random outcome $X$, when trader $\iii$ has risk preferences with constant absolute risk aversion (CARA) equal to $1 / \delta_i$. It is important to point out that functional $\U_i(\cdot)$ also measures wealth in num\'eraire units and hence can be used for comparison among different traders (and equilibria).  We also define the \emph{aggregate risk tolerance} $\delta_I \dfn \sumi \delta_i$, as well as the \emph{relative risk tolerance} $\lambda_i \dfn \delta_i / \delta_I$ of trader $\iii$. Note that $\lambda_I \equiv \sumi \lambda_i = 1$. Following standard practice, we shall use subscript ``$-i$'' to denote aggregate quantities of all traders except trader $\iii$; for example, $\delta_{-i} \dfn \delta_I - \delta_i$ and $\lambda_{-i} \dfn 1 - \lambda_i$, for all $\iii$.

\subsection{Securities and demand}

In the market there exist a finite number of tradeable securities indexed by the non-empty set $K$, with payoffs denoted by $\secu \equiv (\secu_k)_{k \in K} \in(\Lb^0)^K$. The demand function $Q_i$ of trader $\iii$ on the vector $\secu$ of securities is given by
\[
Q_i(p)\dfn \underset{q \in \Real^K}{\argmax} \ \U_i (E_i + \inner{q}{S - p}), \quad p \in \Real^K.
\]
Here, and in the sequel, $\inner{\cdot}{\cdot}$ will denote standard inner product on the Euclidean space $\Real^K$.

We follow a classic model of standard literature (e.g.~\cite{Kyl89, RosWer15, Vay99} and \cite{Viv11}) and assume that the joint law of $(E, \secu)$ is Gaussian. Since traders' endowments do not necessarily belong to the span of $\secu$, the market is incomplete. Note also that endowments are not assumed independent of $\secu$, or independent of each other. Since only securities in random vector $\secu$ are tradeable, we identify \emph{market risk} with the variance-covariance matrix of $\secu$, denoted by
\[
C \dfn \cov(\secu, \secu).
\]
In the sequel will impose the standing assumption that $C$ has full rank. Additionally, for notational convenience, we shall assume that
\[
\expec[\secu_k] = 0, \quad \forall k \in K.
\]
Due to the cash-invariance of the traders' certainty equivalent, the latter assumption does not entail any loss of generality, as we can normalise tradeable securities to be $\secu - \expec[\secu]$. Straightforward computations give
\begin{align*}
\U_i \pare{E_i + \inner{q}{\secu - p}} &= - \delta_i \log \expec \bra{\exp \pare{- (E_i + \inner{q}{\secu - p}) /
		\delta_i}} \\
&= \expec \bra{E_i} - \inner{q}{p} - \frac{1}{2 \delta_i} \var \bra{E_i + \inner{q}{\secu}} \\
&= \expec \bra{E_i} - \frac{1}{2 \delta_i} \var \bra{E_i} - \frac{1}{2 \delta_i} \inner{q}{C q} - \inner{q}{p + \frac{1}{\delta_i} \cov (E_i, \secu)}. 
\end{align*}
We also define the following quantities
\[
u_i \dfn \expec \bra{E_i} - \frac{1}{2 \delta_i} \var \bra{E_i} \equiv \U_i(E_i),\] 
and, for each $\iii$,
\[
a_i \dfn C^{-1} \cov (E_i, \secu), \quad \text{and} \quad a_{-i}\dfn a_I - a_i,
\]
where
\[
a_I \dfn \sum_{\iii} a_i.
\]
Then, it follows that
\[
\U_i \pare{E_i + \inner{q}{\secu - p}} = u_i - \frac{1}{\delta_i}\inner{q}{C a_i} - \frac{1}{2 \delta_i} \inner{q}{C q} - \inner{p}{q},
\]
from which we readily obtain that the demand function of trader $\iii$, given by
\begin{equation}\label{eq.demand}
\Real^K \ni p \mapsto Q_i(p) = - a_i - \delta_i C^{-1} p, \quad \iii,
\end{equation}
is downward-sloping linear. The risk tolerance $\delta_i \in (0, \infty)$ could be considered as the \textit{elasticity} of the demand function of trader $\iii$, with higher $\delta_i$ implying more elastic demand. Furthermore, $a_i \in \Real^K$ gives the correlation of the tradeable securities with the endowment of trader $\iii$, and plays the role of the intercept point of the affine demand function \eqref{eq.demand}. According to \eqref{eq.demand}, when prices of all securities equal zero, the sign of each element of $a_i$ indicates whether trader $\iii$ has incentive to buy (when negative) or sell (when positive) the corresponding security.  

\subsection{Competitive equilibrium}

While our focus will be on noncompetitive equilibrium, we first define competitive equilibrium of our market, to be later used and discussed as a benchmark for comparison, similarly as in \cite{Vay99} and \cite{Viv11}.  Trading the securities represented by $\secu$ without applying any strategic behaviour (i.e., by assuming a \emph{price-taking} mechanism), the traders reach a competitive equilibrium: prices are determined where the traders' aggregate demand equals zero.  

\begin{defn} 
The vector $\pp \in \Real^K$ is called \textbf{competitive equilibrium prices} if 
\[
\sum_{\iii} Q_i (\pp)=0. 
\]
The corresponding allocation $(\pq_i)_{\iii} \in \Real^{K \times I}$ defined via $\pq_i = Q_i(\pp)$ for all $\iii$ will be called a \textbf{competitive equilibrium allocation} associated to (competitive equilibrium) prices $\pp \in \Real^K$.
\end{defn}

Elementary algebra gives the following result.

\begin{prop}\label{prop:comp_equilibrium}
	There exists a unique competitive equilibrium price $\pp$ given by
	\begin{equation}\label{eq:p_comp_equil}
	\pp = - \frac{1}{\delta_I} C a_I,
	\end{equation}
	with associated competitive equilibrium allocations given by
	\begin{equation}\label{eq:q_comp_equil}
	\pq_i = \lambda_i a_I - a_i, \quad \iii.
	\end{equation}
\end{prop}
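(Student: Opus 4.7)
The plan is to substitute the explicit demand functions from \eqref{eq.demand} into the market-clearing condition $\sum_{\iii} Q_i(\pp) = 0$ and solve the resulting linear equation for $\pp$. Since each $Q_i$ is affine in $p$, the aggregate demand will also be affine, and the full-rank assumption on $C$ together with positivity of $\delta_I$ will make the linear system uniquely solvable.

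Concretely, first I would sum \eqref{eq.demand} over $\iii$ to obtain
\[
\sum_{\iii} Q_i(p) \;=\; -\,a_I \;-\; \delta_I\, C^{-1} p.
\]
Setting the right-hand side equal to zero and rearranging immediately yields $C^{-1} \pp = - a_I / \delta_I$, and hence $\pp = -(1/\delta_I)\, C a_I$ after multiplying by $C$. Uniqueness is automatic, since $C^{-1}$ is invertible and $\delta_I > 0$, so the aggregate demand is a bijective affine map from $\Real^K$ to $\Real^K$ with exactly one zero.

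For the allocation formula, I would then plug this $\pp$ back into the individual demand function \eqref{eq.demand}:
\[
\pq_i \;=\; Q_i(\pp) \;=\; -a_i - \delta_i C^{-1}\!\left(-\frac{1}{\delta_I} C a_I\right) \;=\; -a_i + \frac{\delta_i}{\delta_I} a_I \;=\; \lambda_i a_I - a_i,
\]
using the definition $\lambda_i = \delta_i / \delta_I$.

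There is no real obstacle here: the full-rank standing assumption on $C$ removes any degeneracy, and the affine linearity of $Q_i$ makes the clearing condition a one-line calculation. The only thing worth flagging in the write-up is that uniqueness refers to the price vector, while the allocation is uniquely determined only once the price is fixed, as an immediate consequence of \eqref{eq.demand}.
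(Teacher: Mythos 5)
Your proof is correct and is precisely the elementary algebra the paper invokes (the paper states the result follows by ``elementary algebra'' without writing it out): summing the affine demands \eqref{eq.demand}, solving the clearing condition using the invertibility of $C$ and $\delta_I>0$, and substituting back to obtain $\pq_i=\lambda_i a_I - a_i$. Nothing further is needed.
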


\begin{rem}\label{rem:D}
	For $\iii$, $D_i \dfn \inner{a_i}{S}$ is the projection of the endowment $E_i$ onto the linear span of the tradeable security vector $\secu \equiv (\secu_k)_{k \in K}$. At competitive equilibrium, the position of trader $\iii$, net the price paid, is
	\[
	\inner{\pq_i}{S - \pp} = \inner{\lambda_i a_I - a_i}{S} + \frac{1}{\delta_I} \inner{\lambda_i a_I - a_i}{Ca} = \lambda_i D_I - D_i - \expecq \bra{\lambda_i D_I - D_i}, \quad \iii.
	\]
	where $\qprob$ is given through $\ud \qprob / \ud \prob = \exp(-E_I / \delta_I)/\expec_{\prob}\bra{\exp(-E_I / \delta_I)}$, where $D_I \dfn \sum_{\iii}D_i$. In the case where the linear span of the securities equals the linear span of the endowments, it holds that $D_i = E_i - \expecp \bra{E_i}$, for all $\iii$. Then, the competitive equilibrium coincides with the complete-market Arrow-Debreu risk-sharing equilibrium---see, among others, \cite{Bor62, Buhl84} or \cite[Chapters 2 and 3]{MagQui02}.
\end{rem}

\begin{rem} \label{rem:uninteresting_1}
A very special---and as shall be discussed, trivial---situation arises when $a_I = 0$, i.e., when $\cov (E_I, \secu_k) = 0$ holds for every $k \in K$, where we recall that $E_I \dfn \sumi E_i$. In words, $a_I = 0$ means that the total endowment $E_I$ is independent of the spanned subspace of the securities. In this case, in the setting of Proposition \ref{prop:comp_equilibrium}, competitive equilibrium prices of the securities are zero, and $\pq_i = - a_i$. It follows that, in competitive equilibrium, traders simply rid themselves of the hedgeable part of their endowment at zero prices, and end up after the transaction with the part that is independent of the securities. (In this respect, recall the previous Remark \ref{rem:D}.)
\end{rem}

Given that the case $a_I = 0$ is covered by Remark \ref{rem:uninteresting_1} above, we shall assume tacitly in the sequel that $a_I \neq 0$. (The only point where we return to the case $a_I = 0$ is at Remarks \ref{rem:uninteresting_2} and \ref{rem:uninteresting_3}.) When $a_I \neq 0$, we define the following parameters, which will turn out to be crucial for our analysis:
\begin{equation}\label{eq.beta}
\beta_i \dfn \frac{\cov(E_I,\secu)C^{-1}\cov(E_i,\secu)}{\cov(E_I,\secu)C^{-1}\cov(E_I,\secu)}=\frac{\inner{a_I}{Ca_i}}{\inner{a_I}{C a_I}},\quad \iii.
\end{equation}
Note that
\[
\beta_I \equiv \sum_{\iii} \beta_i = 1.
\]
When the traders' endowments are tradeable, i.e., when the endowment vector $E_i$ belongs in the linear span of $(\secu_k)_{k \in K}$ for all $\iii$, then $\beta_i$ literally coincides with the beta of the $i$th endowment, in the terminology of the Capital Asset Pricing Model. In general, $\beta_i$ should be considered as a ``\emph{projected beta}'' of the $i$th endowment onto the space of tradeable securities; as stated in the introduction, it shall be called simply (pre-transaction) \emph{beta} in the sequel. Consistently to classical theory, betas shall measure the level of \textit{exposure to market risk} of each trader before and after the equilibrium transaction.

Both equilibrium prices and allocations strongly depend on the traders' heterogeneity. After the competitive transaction, the position of trader $\iii$ is $E_i + \inner{\pq_i}{\secu - \pp}$, and one may immediately calculate the post-transaction beta of the position to be equal to $\lambda_i$. Hence, at competitive risk sharing, each trader ends up with a positive exposure to  market risk,  with a beta less than one, even if initial positions are negatively correlated to market risk. Note also that traders with higher risk tolerance are willing to get relatively more exposure to the market risk through the competitive transaction.  

The cash amount (\emph{signed risk premium}) that trader $\iii$ pays to obtain post-transaction beta equal to $\lambda_i$ is 
\[
\inner{\pq_i}{\pp} = (\beta_i-\lambda_i)\inner{a_I}{Ca_I}/\delta_I,
\]
which is linearly increasing with respect to $\beta_i$. In fact, traders that reduce their beta after the competitive transaction (i.e., those with $\lambda_i<\beta_i$) pay a positive risk premium $\abs{\inner{\pq_i}{\pp}} = \inner{\pq_i}{\pp}$ to their counter-parties. On the other hand, traders that undertake market risk at the competitive transaction (i.e., those with $\beta_i<\lambda_i$) are compensated with a risk premium $\abs{\inner{\pq_i}{\pp}} = - \inner{\pq_i}{\pp}$.

Based on the formulas of equilibrium prices and allocations of \eqref{eq:p_comp_equil} and \eqref{eq:q_comp_equil}, we readily calculate and decompose the traders' utility at competitive equilibrium as
\begin{align}\label{eq: gain in equilibrium}
\U_i \pare{E_i + \inner{\pq_i}{\secu - \pp}}&= u_i+ \frac{1}{2\delta_i} \abs{C^{1/2} (\lambda_i a_I - a_i)}^2=u_i+ \frac{1}{2\delta_i} \abs{C^{1/2} \pq_i}^2 \\
\nonumber &= u_i+\underset{\text{profit/loss from random payoff}}{\underbrace{\frac{1}{2\delta_i}\inner{a_i}{Ca_i}-\lambda_i^2\frac{\inner{a_I}{Ca_I}}{2\delta_i}}}\quad\underset{\text{(signed) risk premium}}{\underbrace{-\frac{\beta_i-\lambda_i}{\delta_I}\inner{a_I}{Ca_I}}}, \quad \iii.
\end{align}
Larger trades at competitive equilibrium result in higher utility gain after the transaction. The above decomposition of utility into risk-sharing gain and risk premium allows one to analyse further the exact sources of utility for each trader, and will prove especially useful later on, when comparing competitive and noncompetitive equilibria.

\section{Traders' Best Response Problem} \label{sec:best_response}

\subsection{The setting of trader's response problem}

While it is rather reasonable to assume that pre-transaction betas are publicly known, it is problematic to impose a similar informational assumption on traders' risk profiles. We view risk tolerance as a subjective parameter, and more realistically consider it as \emph{private information} of each individual trader. In the CARA-normal market setting treated here, each trader's risk tolerance is reflected in the elasticity of the submitted demand function. In particular, from Proposition \ref{prop:comp_equilibrium} and the induced individual utility gain \eqref{eq: gain in equilibrium}, elasticities of traders' submitted demand directly affect both the allocation of market risk and the associated risk premia. Therefore, it is reasonable to inquire whether an individual trader has motive to strategically choose the elasticity of the submitted demand function. More precisely, adapting the family of linear demand functions with downward slope of the form \eqref{eq.demand}, strategically chosen elasticity is equivalent to submitting demand function
\begin{equation}\label{eq.sub demand}
Q_i^{\rtoi}(p) = - a_i - \rtoi C^{-1} p, \quad p \in \Real^K,
\end{equation}
where $\rtoi\in (0,\infty)$ is the elasticity of the submitted demand function $Q_i^{\rtoi}$; equivalently, $1/\rtoi$ is the risk aversion reflected by the submitted demand. In the extreme case where $\rtoi \rightarrow \infty$, trader $\iii$ submits extremely elastic demand, or equivalently behaves as risk neutral, while $\rtoi\rightarrow 0$ indicates extremely inelastic demand, i.e., a case where the trader does not want to undertake any risk.

The question addressed in the present section is how traders choose the elasticity of their demand function within the family of demands \eqref{eq.sub demand}, and whether this is different than their risk tolerance. In order to make headway with examining the best response function of trader $\iii$, we assume that all traders except trader $\iii$ have submitted an aggregate linear demand function of the form \eqref{eq.sub demand}, where 
$\rtoni=\sum_{j \in I \setminus \set{i}}\theta_j \in (0,\infty)$ is the aggregate elasticity of all but traders except trader $\iii$. Under this scenario, if trader $\iii$ chooses to submit the demand function \eqref{eq.sub demand} with $\rtoi \in (0,\infty)$, and recalling \eqref{eq:p_comp_equil} and \eqref{eq:q_comp_equil}, the equilibrium price and allocations will equal
\[
\pp(\rtoi;\rtoni) = - \frac{1}{\rtoi+\rtoni} C a_I, \qquad \pq_i(\rtoi;\rtoni) = \frac{\rtoi}{\rtoi+\rtoni} a_I - a_i,
\]
and hence the trader's payoff will equal
\[
E_i+\inner{\pq_i(\rtoi;\rtoni)}{\secu - \pp(\rtoi;\rtoni)}.
\] 
Since $\theta_{-i} > 0$, the limiting cases when $\theta_i = 0$ (interpreted as extreme inelasticity) and $\theta_i = \infty$ (interpreted as risk neutrality) are well defined; indeed, taking limits in the expressions above, it follows that
\begin{align*}
\pp(0;\rtoni) = - \frac{1}{\rtoni} C a_I&, \qquad \pq_i(0;\rtoni) = - a_i, \\
\pp(\infty;\rtoni) = 0&, \qquad \pq_i(\infty;\rtoni) = a_I - a_i = a_{-i}.
\end{align*}
Risk-neutral acting traders satisfy all the demand of the other traders, accepting all their market risk, without asking a risk premium (recall that we have assumed that $\expec[\secu_k] = 0$, $\forall k \in K$). On the other hand, extremely unelastic demand implies hedging all the initial position, making the post-transaction beta equal to zero and in fact delegating determination of equilibrium prices to other traders. Using the standard terminology of portfolio management, we call \emph{market-neutral} a position with zero beta.

For $\rtoni \in (0, \infty)$, and under the standing assumption of Gaussian endowments and securities made in Section \ref{sec:setup}, the \emph{response function} of trader $\iii$ is
\begin{align*}
(0,\infty) \ni \rtoi \mapsto \V_i(\rtoi;\rtoni) &\equiv  \U_i(E_i+\inner{\pq_i(\rtoi;\rtoni)}{\secu - \pp(\rtoi;\rtoni)}) \\
&u_i+\inner{\frac{\rtoi}{\rtoi+\rtoni} a_I - a_i}{\, C\pare{\frac{1}{\rtoi+\rtoni} a_I -\frac{1}{2\delta_i}\pare{\frac{\rtoi}{\rtoi+\rtoni} a_I + a_i}}},
\end{align*}
with $\rtoi$ indicating parametrisation of the trader's strategic behaviour. Since the limiting cases for $\theta_i$ are also well defined, we allow a trader to submit demand functions that declare extreme and  zero elasticity; for these cases, we have
\begin{align*}
\V_i(0;\rtoni)&=\U_i \pare{ E_i-\inner{a_i}{S} -\frac{1}{\rtoni} \inner{a_i}{C a_I} } = u_i+\frac{1}{2\delta_i}\inner{a_i}{Ca_i}-\frac{1}{\rtoni}\inner{a_i}{C a_I},\\
\V_i(\infty;\rtoni) &= \U_i(E_i+\inner{a_{-i}}{S}) = u_i-\frac{1}{2\delta_i}\inner{a_{-i}}{C(a_I + a_i)}.
\end{align*}
Summing up, given $\rtoni\in(0,\infty)$, the trader $\iii$'s best response problem is maximising the post-transaction utility by strategically chosen the submitted demand elasticity, i.e., 
\begin{equation}\label{eq.best_response_problem}
\rb(\rtoni) = \underset{\rtoi\in[0,\infty]}{\argmax}\V_i(\rtoi;\rtoni).
\end{equation}

\begin{rem} \label{rem:uninteresting_2}
When $a_I = 0$, $\V_i(\theta_i ;\rtoni) = u_i + \inner{a_i}{Ca_i} / 2\delta_i$ holds for all $\theta_i \in [0, \infty]$. In this case, the response function is flat, and any response leads to the same  equilibrium prices $\pp(\theta_i;\rtoni) = 0$ and allocation $\pq_i(\theta_i;\rtoni) = - a_i$ for trader $\iii$, irrespectively of the value of $\theta_{-i}$. These are exactly the prices and allocations one obtains at competitive equilibrium.
\end{rem}

The following result shows that, under the assumptions made in Section \ref{sec:setup} (in particular, that $a_I \neq 0$), the best response problem \eqref{eq.best_response_problem} admits a unique solution (recall that $\beta_{-i}$ denotes the difference $1-\beta_i$, which is equal to $\sum_{j \in I \setminus \set{ i}}\beta_j$). 
\begin{prop} \label{p.best_response}
	Given $\rtoni\in(0,\infty)$, the best response of trader $\iii$ exists, is unique and given as follows:
	\begin{align}\label{eq.rb}
	\rb(\rtoni)=\left\{
	\begin{array}{ll} 0, & \text{if }\beta_i\leq -1; \\
	\delta_i \rtoni(1+\beta_i) / \pare{\rtoni+\delta_i\beta_{-i}}, & \text{if } -1 < \beta_i < 1 + \rtoni / \delta_i; \\
	\infty, & \text{if }\beta_i\geq 1 + \rtoni / \delta_i. \\
	\end{array}
	\right.
	\end{align}
\end{prop}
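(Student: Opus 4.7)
The plan is to reduce the optimization over $\rtoi \in [0,\infty]$ to an unconstrained concave quadratic optimization via a clean reparametrization. First, I would introduce the new variable
\[
s \dfn \frac{\rtoi}{\rtoi+\rtoni} \in [0,1],
\]
so that $s=0$ corresponds to $\rtoi=0$, $s=1$ to $\rtoi=\infty$, and the continuous bijection $\rtoi = s\rtoni/(1-s)$ relates the two. In these coordinates, $\pq_i(\rtoi;\rtoni) = s a_I - a_i$ and $\pp(\rtoi;\rtoni) = -\frac{1-s}{\rtoni}Ca_I$, which should dramatically simplify the algebra.

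Next, using the identity $\U_i(E_i+\inner{q}{\secu-p}) = u_i - \inner{p}{q} - \frac{1}{\delta_i}\inner{q}{Ca_i} - \frac{1}{2\delta_i}\inner{q}{Cq}$ derived in Section \ref{sec:setup}, together with the abbreviations $\alpha \dfn \inner{a_I}{Ca_I}>0$ (positive since $C$ is positive definite and $a_I \neq 0$) and $\beta_i = \inner{a_I}{Ca_i}/\alpha$, I would compute each of the three inner products and collect terms. After cancellations, the result I expect is
\[
\V_i(\rtoi;\rtoni) = u_i + \frac{\gamma_i}{2\delta_i} + \frac{(1-s)(s-\beta_i)\alpha}{\rtoni} - \frac{s^2\alpha}{2\delta_i},
\]
where $\gamma_i \dfn \inner{a_i}{Ca_i}$ does not depend on $s$. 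So maximizing $\V_i$ over $\rtoi \in [0,\infty]$ is equivalent to maximizing the right-hand side over $s \in [0,1]$.

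Expanding $(1-s)(s-\beta_i) = -s^2 + (1+\beta_i)s - \beta_i$ and collecting in $s$ yields a quadratic with leading coefficient $-\alpha(1/\rtoni + 1/(2\delta_i)) < 0$, so the function is strictly concave; uniqueness of the maximizer is therefore automatic. Setting the derivative to zero gives the unconstrained critical point
\[
s^* = \frac{\delta_i(1+\beta_i)}{2\delta_i + \rtoni}.
\]
It remains to project onto $[0,1]$: $s^* \le 0$ iff $\beta_i \le -1$ (giving $\rb(\rtoni)=0$); $s^* \ge 1$ iff $\delta_i(1+\beta_i) \ge 2\delta_i + \rtoni$, equivalently $\beta_i \ge 1 + \rtoni/\delta_i$ (giving $\rb(\rtoni)=\infty$); in the intermediate regime, converting back via $\rb(\rtoni) = s^*\rtoni/(1-s^*)$ and using $1-s^* = (\rtoni+\delta_i\beta_{-i})/(2\delta_i+\rtoni)$ produces $\delta_i\rtoni(1+\beta_i)/(\rtoni+\delta_i\beta_{-i})$, matching \eqref{eq.rb}.

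There is no real obstacle here beyond keeping the algebra tidy; the key design choice is the reparametrization $s = \rtoi/(\rtoi+\rtoni)$, which turns a rational function of $\rtoi$ on $[0,\infty]$ into a concave quadratic on the compact interval $[0,1]$, so that existence, uniqueness, and the piecewise structure of the answer all fall out simultaneously from a single first-order analysis.
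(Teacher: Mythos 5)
Your proposal is correct and follows essentially the same route as the paper's proof: the substitution $s=\rtoi/(\rtoi+\rtoni)$ is exactly the paper's change of variable $k_i$, the resulting strictly concave quadratic on $[0,1]$ is the paper's \eqref{eq.vkbi} (up to the irrelevant additive constant $\inner{a_i}{Ca_i}/2\delta_i$, which you correctly retain), and the critical point $s^*=\delta_i(1+\beta_i)/(2\delta_i+\rtoni)$ with projection onto $[0,1]$ reproduces \eqref{eq.kbi} and the three regimes of \eqref{eq.rb}. The computations all check out.
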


\begin{proof}
	Fix $\rtoni\in(0,\infty)$. Making the monotone change of variable
	\[
	[0, \infty] \ni \theta_i \mapsto k_i \dfn \frac{\rtoi}{\rtoi+\rtoni}\in[0,1],
	\]
	and using a slight abuse of notation, maximizing value function $\V_i$ is equivalent to maximising 
	\begin{align} \label{eq.vkbi}
	\V_i(k_i;\rtoni) &= u_i+\inner{a_I}{C a_I}\pare{\frac{(1-k_i)k_i}{\rtoni}-\frac{k^2_i}{2\delta_i}}-\inner{a_I}{Ca_i}\frac{1-k_i}{\rtoni} \\ 
	\nonumber &= u_i+\inner{a_I}{C a_I}\pare{\frac{(1-k_i)k_i}{\rtoni} -\frac{k^2_i}{2\delta_i} - \beta_i \frac{1-k_i}{\rtoni} }.
	\end{align}
	Since $a_I \neq 0$, the above is a strictly concave quadratic function of $k_i \in[0,1]$; in particular, it has a unique maximum. When $\beta_i \leq -1$ (resp., when $\beta_i \geq 1+\rtoni/\delta_i$), it is straightforward to see that $[0,1] \ni k_i \mapsto \V_i(k_i; \theta_{-i})$ is decreasing (resp., increasing). It follows that $\rb(\rtoni)=0$ when $\beta_i \leq -1$, while $\rb(\rtoni)=\infty$ when $\beta_i \geq 1+\rtoni/\delta_i$. When $-1 < \beta_i < 1+\rtoni/\delta_i$, first-order conditions in \eqref{eq.vkbi} give that the unique maximizer of $[0,1] \ni k_i \mapsto \V_i(k_i; \theta_{-i})$ is
	\begin{equation}\label{eq.kbi}
	\kb_i(\theta_{-i})=\pare{2 + \frac{\rtoni}{\delta_i} }^{-1} \pare{1 + \beta_i}.
	\end{equation}
	It then readily follows from \eqref{eq.kbi} that the unique maximizer of $[0, \infty] \ni \theta_i \mapsto \V_i(\rtoi,\rtoni)$ is $\rb(\rtoni)=\delta_i\rtoni(1+\beta_i)/(\rtoni+\delta_i(1-\beta_i))\in(0,\infty)$. 
\end{proof}

According to Proposition \ref{p.best_response}, extreme best responses $\theta_i$ for trader $\iii$ are possible, given $\theta_{-i} \in (0, \infty)$. In fact, the best response is zero if and only if $\beta_i \leq - 1$, irrespectively of the value of $\theta_{-i}$, and the best response is infinity if and only if $\beta_i \geq 1 + \rtoni / \delta_i$. In view of this potentiality, it makes sense to understand how a trader would respond if $\theta_{-i}$ itself took an extreme value.

We start with the case $\theta_{-i} = \infty$. In this case, taking the limit as $\theta_{-i} \to \infty$ in \eqref{eq.vkbi} gives
\begin{equation}\label{eq.best_response_extreme}
\rb(\infty)=\delta_i(1+\beta_i)_+.
\end{equation}
The case $\rtoni=0$ may be treated similarly, but it is worthwhile making an observation. Note that $\rtoni=0$ means that all other traders except $\iii$ submit extremely unelastic demands. According to the solution of the best response problem, and anticipating the definition of Bayesian Nash equilibrium in Section \ref{sec.Nash}, this only makes sense when $\beta_j \leq -1$ holds for $j \in I \setminus \set{i}$. Since $\beta_i = 1 - \sum_{j \in I \setminus \set{i}} \beta_j$ and there are at least two traders, it should be that $\beta_i > 1$. In this case, taking the limit as $\theta_{-i} \to 0$ in \eqref{eq.vkbi} gives $\rb(0) = \infty$. To recapitulate: when $\theta_{-i} = \infty$ the best response is given by \eqref{eq.best_response_extreme}. The case $\theta_{-i} = 0$ is interesting only in the case $\beta_i > 1$, where we set
\[
\rb(0) = \infty, \quad \text{whenever} \quad \beta_i > 1.
\]

\smallskip


It is clear from Proposition \ref{p.best_response} that non-price-taking traders have motive to submit demand function of different elasticity than their risk tolerance. The main determinant of departure from agents' true demand is their pre-transaction beta, defined in \eqref{eq.beta}. In order to analyse the effect of strategic behaviour on the equilibrium prices and allocations, we may consider the situation where trader $\iii$ is the only one acting strategically against price-takers; all other agents submit the elasticity corresponding to their true demand functions for the transaction. In symbols, we set $\rtoni=\delta_{-i}$. This can be seen as a one-sided noncompetitive equilibrium, in the sense that only trader $\iii$ exploits knowledge on other traders' elasticity and endowments, and responds optimally. The post-transaction beta \eqref{eq.kbi} becomes $\kb_i=0$ when $\beta_i \leq -1$, $\kb_i=1$ when $\beta_i\geq 1/\lambda_i$, and $\kb_i=\lambda_i (1+\beta_i) / (1+\lambda_i)$ when $\beta_i \in (-1, 1/\lambda_i)$. In obvious terminology, we shall call the latter regime \emph{non-extreme}, while the former two will be called  \emph{extreme}.

It is completely straightforward from the closed-form expressions for $\kb_i$ that   
\[
\lambda_i<\beta_i\quad\text{if and only if}\quad\lambda_i<\kb_i<\beta_i.
\]
Taking into account the discussion following Proposition \ref{prop:comp_equilibrium}, the above fact implies that traders have motive to submit more elastic demand functions if and only if they reduce their market risk through the transaction. At the non-extreme regime, this happens when $\beta_i \in (\lambda_i, 1/\lambda_i)$, where the trader's initial position is considered \emph{relatively more exposed to market risk}. 
	
A direct outcome when acting more \textit{aggressively} by submitting more elastic demand is that the post-transaction beta entails more risk: indeed, instead of $\lambda_i \inner{a_I}{\secu}$ at competitive equilibrium, the (random part of) the portfolio after submitting demand with elasticity $\rb$ equals $\kb_i\inner{a_I}{\secu}$. In particular, the post-transaction beta of trader $\iii$ is $\kb_i$, instead of $\lambda_i$. Although the reduction of risk exposure is lower when compared to the competitive equilibrium, it comes at a better price. To wit, we readily calculate that in the whole non-extreme regime $\beta_i \in (-1, 1/\lambda_i)$ it holds that\footnote{When $\beta\in(-1,1/\lambda_i)$, the exact cash benefit from the best response strategy equals
\[
\inner{\qb_i}{\pp-\pb}=\inner{a_I}{Ca_I}\frac{\lambda_i(\beta_i-\lambda_i)^2}{\delta_I(1+\lambda_i)^2(1-\lambda_i)}.
\]
At competitive equilibrium trader $\iii$ pays $\inner{\pq_i}{\pp}=(\beta_i-\lambda_i)\inner{a_I}{Ca_I}/\delta_I$ to reduce beta exposure to $\lambda_i$, while acting strategically the trader pays $\inner{\qb_i}{\pb}=(\beta_i-\lambda_i)\inner{a_I}{Ca_I}(1-\lambda_i\beta_i)/[(\delta_I-\delta_i)(1+\lambda_i)^2]$ to reduce beta exposure to $\kb_i$. Note that $\inner{\qb_i}{\pb}<\inner{\pq_i}{\pp}$, when $\beta_i\in(\lambda_i,1/\lambda_i)$.}
\[
\inner{\qb_i}{\pb}<\inner{\qb_i}{\pp},
\]
which means that the gain of the strategic behaviour comes from the lower premium that is paid.

\begin{rem}\label{rem:comparative statics}
Under the very special case $\beta_i=\lambda_i$, one obtains $\rb(\rtoni)=\delta_i$, i.e., $\kb_i=\lambda_i$. In view of \eqref{eq:q_comp_equil}, the latter condition implies $\pq_i =0$ and hence trader $\iii$ does not participate in the sharing of risk; this is also the case in competitive equilibrium.
\end{rem}

\section{Noncompetitive Risk-Sharing Equilibrium}\label{sec.Nash}

\subsection{Nash equilibrium}

With the best response problem \eqref{eq.best_response_problem} in mind, and assuming that all traders act strategically, we now address noncompetitive Bayesian Nash equilibrium. More precisely, in a fashion similar to the demand-submission game of \cite{Kyl89}, traders submit linear demand schedules of the form \eqref{eq.sub demand}, where $(\rtoi)_{\iii} \in [0,\infty]^{I}$ and $\theta_I = \sum_{\iii}\rtoi>0 $ are the corresponding individual and aggregate  submitted demand elasticity. The market equilibrates at the pairs of prices and allocations at which the submitted demands sum up to zero. According to Proposition \ref{prop:comp_equilibrium}, as well as relations \eqref{eq:p_comp_equil} and \eqref{eq:q_comp_equil},  
for every submitted demands with elasticities  $(\rtoi)_{\iii} \in [0,\infty]^{I}$, 
the prices and allocations that clear out the market are given by $\pp((\rtoi)_{\iii})=- (1/\theta_I) C a_I$, as well as $\pq_j ((\rtoi)_{\iii}) = (\rto_j/\theta_I) a_I - a_j,$ for each $\jii$. In other words, traders' strategies are parametrised by their submitted elasticity within the family of linear demands \eqref{eq.sub demand}, according to the best response \eqref{p.best_response}, and noncompetitive equilibria are fixed points of these responses. 

\begin{defn}\label{d.Nash}
A vector $(\rgi)_{\iii}\in[0,\infty]^{I}$, with $\rg_I\dfn\sum_{\iii}\rgi>0$, is called \textbf{Nash equilibrium} or \textbf{noncompetitive equilibrium} if, for each $\iii$, 
\[
\V_i(\rgi;\rgni)\geq\V_i(\rtoi;\rgni),\quad\forall\rtoi\in[0,\infty].
\]
By a slight abuse of terminology, we also call a \textbf{Nash price-allocation equilibrium} the corresponding pair $(\pg,(\qgi)_{\iii})\in\Real^K\times\Real^{K \times I}$ given by  \begin{equation}\label{eq:nash_equil}
	\pg = - \frac{1}{\rg_I} C a_I\quad\text{and}\quad \qgi = \frac{\rgi}{\rg_I} a_I - a_i, \quad \iii.
	\end{equation}
	where we set $\rgi / \rg_I = 1$ whenever $\rgi = \infty$, by convention.
\end{defn}

From the discussion of Section \ref{sec:best_response}, and particularly given \eqref{eq.rb} and \eqref{eq.best_response_extreme}, the possibility of noncompetitive equilibrium where some traders behave as being risk neutral (i.e., $\rg_i = \infty$ for some $\iii$) arises. We shall call such Nash equilibria where $\rg_I = \infty$ \emph{extreme}, and any other case where the total elasticity $\rg_I$ belongs to $(0,\infty)$ will be called \emph{non-extreme}.

\begin{rem} \label{rem:uninteresting_3}
When $a_I = 0$, it follows from Remark \ref{rem:uninteresting_2} that any vector $(\theta_i)_{\iii} \in \Real_+^{I}$ is a Nash equilibrium, resulting always in the same Nash price-allocation with $\pg = 0$ and $\qg_i = - a_i$ for all $\iii$. Therefore, prices and allocations at competitive and Nash equilibrium coincide. In the sequel, we continue the analysis by excluding this trivial case $a_I = 0$.
\end{rem}

\begin{rem}\label{rem:literature_comp}
Having defined our notion of noncompetitive equilibrium, we highlight its differences with the thin market models studied in \cite{RosWer15, MalRos14}. As pointed out in the introductory section, the price impact in these papers equals the slope of the aggregate demand submitted by other traders. Traders respond to---or equivalently, trade against---the price impact of their counter-parties forming a slope-game; see \cite[Lemma 1]{RosWer15} and \cite[Proposition 1]{MalRos14}. Our model keeps the form of equilibrium similar to the competitive one, as the family of demands are linear and of the form \eqref{eq.sub demand}; furthermore, although we parametrise traders' strategies to the single control variable that is elasticity, the key element is that responses, and hence equilibrium conditions, take into account the whole demand function of other traders. 
\end{rem}

Our main goal in the sequel is to study existence and uniqueness of the aforementioned linear Bayesian Nash equilibrium, and compare it with the competitive one. Departure from competitive market structure reduces the aggregate transaction utility gain. Indeed, it can be easily checked (see, for example, \cite[Corollary 5.7]{AnthZit10}) that the allocation $(\pq_i)_{\iii}$ of \eqref{eq:q_comp_equil} maximises the sum of traders' monetary utilities over all possible market-clearing allocations. As utilities given by \eqref{eq:utility_functional} are monetary, we can measure the risk-sharing inefficiency of any noncompetitive equilibrium as the difference between  \textit{aggregate} utility at Nash and competitive equilibrium.

We shall verify in the sequel that risk sharing in the noncompetitive equilibrium is, except in trivial cases, socially inefficient. However, it is not necessarily true that each individual trader's utility is reduced; in fact, it is reasonable to ask which (if any) traders prefer Nash risk sharing in such a thin market, as opposed to the corresponding market that equilibrates in competitive manner. For this, we compare the individual utility gains at two equilibria, that is, the difference 
\begin{equation}\label{eq.comparison_utilities}
\DU_i\equiv \underset{\text{utility at Nash equilibrium}}{\underbrace{\U_i \pare{E_i + \inner{\qg_i}{\secu - \pg}}}} - \underset{\text{utility at competitive equilibrium}}{\underbrace{\U_i \pare{E_i + \inner{\pq_i}{\secu - \pp}}},} \quad \text{ for each }\iii,
\end{equation} 
and ask when this is positive. Given this notation, and as discussed above, the inefficiency  of the noncompetitive risk-sharing is defined as the sum $\sum_{\iii}\DU_i$.

\subsection{Equilibrium with at most one trader's beta being greater than one}\label{subsec:More_than_two}

Under the condition\footnote{We conjecture that Theorem \ref{thm:ex_and_un} is true in all cases, although we do not have a rigorous proof of this claim.} that at most one of the traders have initial beta higher than one, that is
\begin{equation} \label{eq:beta_more_one_up_to_one}
\# \set{\iii \such \beta_i > 1} \in \{ 0,1 \},
\end{equation}
the next result states that there exists a unique \emph{linear} noncompetitive equilibrium.

\begin{thm} \label{thm:ex_and_un}
Under \eqref{eq:beta_more_one_up_to_one}, there exists a unique Nash equilibrium as in Definition \ref{d.Nash}. 
\end{thm}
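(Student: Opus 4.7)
My plan is to split the analysis into the \emph{extreme} regime ($\theta^*_I = \infty$) and the \emph{non-extreme} one, reduce the non-extreme case to a scalar fixed-point equation, and show the two regimes are mutually exclusive and jointly exhaustive. In the non-extreme regime, writing $x \dfn \theta^*_I$ and $x_i \dfn \theta^*_i$, the identity $\theta^*_i = \delta_i\theta^*_{-i}(1+\beta_i)/(\theta^*_{-i}+\delta_i(1-\beta_i))$ from Proposition \ref{p.best_response} rearranges to the quadratic
\[
x_i^2 - (x+2\delta_i)\,x_i + \delta_i(1+\beta_i)\,x = 0,
\]
with discriminant $D_i(x) \dfn x^2 - 4\delta_i\beta_i x + 4\delta_i^2$. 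For each $i$ with $\beta_i\geq -1$ I would let $g_i(x)$ denote the root compatible with trader $i$'s non-extreme best response (the smaller root when $|\beta_i|\leq 1$; a branch selected with care when $\beta_0 > 1$), and set $g_i\equiv 0$ when $\beta_i<-1$. A non-extreme Nash equilibrium then corresponds to a positive zero of $h(x) \dfn \sum_i g_i(x) - x$. Under \eqref{eq:beta_more_one_up_to_one}, $D_i$ is nonnegative on $[0,\infty)$ whenever $|\beta_i|\leq 1$, while for the at-most-one overexposed trader (call it trader $0$ with $\beta_0>1$) the analysis is confined to $[x^\circ,\infty)$ with $x^\circ \dfn 2\delta_0(\beta_0+\sqrt{\beta_0^2-1})$, the point where the two branches of trader $0$'s best-response curve coalesce.

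Next comes the boundary analysis that yields existence. A Taylor expansion gives $g_i(x) \sim (1+\beta_i)_+ x/2$ as $x\downarrow 0$, so $\sum_i g_i(x)/x$ starts strictly above $1$ (using $\sum_i\beta_i=1$ and $|I|\geq 2$); asymptotically, $g_i(x)\to \delta_i(1+\beta_i)_+$ for $|\beta_i|\leq 1$, while on the relevant branch for the overexposed trader one has either $g_0(x)\to\delta_0(1+\beta_0)$ (giving $h(x)\to -\infty$) or $g_0(x)\sim x - \delta_0(\beta_0-1)$, leading to
\[
\lim_{x\to\infty}h(x) = \sum_{j\neq 0}\delta_j(1+\beta_j)_+ - \delta_0(\beta_0-1),
\]
which is strictly positive exactly when the \emph{extreme condition} $\sum_{j\neq 0}\delta_j(1+\beta_j)_+ \leq \delta_0(\beta_0-1)$ fails. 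Existence of a non-extreme zero then follows from the intermediate value theorem. For uniqueness I would exploit the explicit formula $g_i''(x) = \mp 2\delta_i^2(1-\beta_i^2)/D_i(x)^{3/2}$ so that, on the correctly selected branches, $h$ (or an auxiliary ratio such as $\sum_i g_i(x)/x$) has definite curvature or monotonicity, admitting at most one zero.

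The extreme regime is handled through \eqref{eq.best_response_extreme}: any $\theta^*_i=\infty$ forces $\beta_i\geq 1+\theta^*_{-i}/\delta_i$ and $\theta^*_j = \delta_j(1+\beta_j)_+$ for every $j\neq i$; by \eqref{eq:beta_more_one_up_to_one} only trader $0$ can be extreme, and the consistency requirement reduces exactly to the extreme condition above, which is complementary to the non-extreme existence criterion. Mutual exclusivity combined with uniqueness inside each regime yields global existence and uniqueness. The main technical obstacle, I expect, is the correct branch selection for $g_0$ when $\beta_0 > 1$: trader $0$'s best response $f_0(T) = \delta_0 T(1+\beta_0)/(T-\delta_0(\beta_0-1))$ is \emph{decreasing} in $T = \theta^*_{-0}$, and the self-consistency equation $\theta_0 = f_0(x-\theta_0)$ genuinely admits two solutions on $[x^\circ,\infty)$ (the two roots of the quadratic, arising as the two intersections of trader $0$'s best-response curve with the line $\theta_0+\theta_{-0}=x$). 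Only one of these is compatible with the simultaneous best responses of the remaining traders, and pinpointing it requires a careful geometric or monotonicity argument tracking that consistency, which I anticipate to form the technical heart of Appendix \ref{sec:appe}.
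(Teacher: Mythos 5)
Your overall architecture---splitting into extreme and non-extreme regimes, showing they are complementary via the condition \eqref{eq.condition_for_extreme_bis}, reducing the non-extreme case to a scalar equation in $x=\rg_I$ by solving each trader's fixed-point condition \eqref{eq.Nash_couples} as a quadratic, and using boundary behaviour at $x\downarrow 0$ and $x\to\infty$---is exactly the paper's. Your treatment of the extreme regime, the root selection for traders with $|\beta_i|\leq 1$ (discard the larger root because it exceeds $\rg_I$), and the limits $\phi_i(0+)=0$, $\phi_i'(0+)=(1+\beta_i)/2$, $\phi_i(\infty)=\delta_i(1+\beta_i)$ all match Lemma \ref{lem:phi_i} and the surrounding argument. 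For the configuration where every trader has $\beta_i\leq 1$, your proof is essentially complete and coincides with the paper's.

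The genuine gap is the one you yourself flag: the non-extreme case with one trader, say $0$, having $\beta_0>1$ while \eqref{eq.condition_for_extreme_bis} fails. There your quadratic for trader $0$ has \emph{both} roots below $x$ (the inequality that kills the larger root relies on $\beta_i\leq 1$), its discriminant forces $x\geq x^\circ$, and on the surviving large-root branch $g_0(x)/x$ tends to $1$ from below and is not monotone, so neither your IVT existence argument (you would need a sign change of $h$, but $h>0$ near $0$ is inaccessible on $[x^\circ,\infty)$ and your claimed limit at $\infty$ is \emph{positive} precisely in the relevant regime) nor your curvature-based uniqueness argument goes through as stated. The paper avoids the two-branch problem entirely by never solving trader $0$'s quadratic: since the remaining traders' elasticities are already pinned down as $\rg_I-\rg_0=\sigma(\rg_I)\dfn\sum_{i\in J}\phi_i(\rg_I)$, trader $0$'s first-order condition becomes \emph{linear} in $\rg_0$ given $\rg_I$, yielding $\rg_0=(1+\beta_0)\delta_0\,\rg_I/(2\delta_0+\sigma(\rg_I))$ with no branch ambiguity. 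Substituting back produces the single equation \eqref{eq:key}, whose left-hand side is a sum of two strictly decreasing functions of $x$ (using concavity of $\sigma$ with $\sigma(0)=0$), so existence and uniqueness follow in one stroke from comparing the limits at $0+$ and $\infty$ with the value $1$. If you want to salvage your symmetric-quadratic formulation, you must supply the missing consistency argument that selects the correct branch for trader $0$ and then prove monotonicity of the resulting $h$ on $[x^\circ,\infty)$; as written, existence and uniqueness are unproven exactly in the case that condition \eqref{eq:beta_more_one_up_to_one} was introduced to permit.
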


According to \eqref{eq.rb}, traders behave as being risk neutral when their initial exposure to market risk is sufficiently higher than one. As we will show in Proposition \ref{pr.extreme} below, this behaviour pertains at equilibrium, making it an extreme one, if and only if the following condition holds:
\begin{equation}\label{eq.condition_for_extreme_bis}
\sum_{\iii}\delta_i(1+\beta_i)_+ \leq 2 \max_{\iii} (\delta_i \beta_i ).
\end{equation}  
When \eqref{eq.condition_for_extreme_bis} fails, the (unique) Nash equilibrium is non-extreme; in this case, and in view of \eqref{eq.rb}, the following coupled system of equations 
\begin{equation}\label{eq.Nash_couples} 
\pare{2 + \frac{\rg_I - \rgi}{\delta_i} } \frac{\rgi}{\rg_I} =  1 + \beta_i, \quad \forall \iii\text{ with }\beta_i>-1,
\end{equation}
should hold, where it is $\rg_I$ which couples the equations. According to  \eqref{eq.rb}, any trader $\iii$ with $\beta_i \leq -1$ optimally submits demand function with zero elasticity, inducing a market-neutral post-transaction position, where recall that a position is called market-neutral when it has zero induced beta. Theorem \ref{thm:ex_and_un} states, in particular, that the system \eqref{eq.Nash_couples} admits a unique solution for an arbitrary number of traders when \eqref{eq.condition_for_extreme_bis} fails. This fact is proved in Appendix \ref{sec:appe}, and it is important to note that the proof is \emph{constructive}, and hence can be used to numerically calculate the equilibrium quantities when the number of traders is more than two; the case of two traders admits in fact a closed-form solution and is extensively studied in \S \ref{subsec.two_agents} later on.

\subsection{Risk-neutral behaved trader(s)} \label{subsec:extreme_case}

Having established existence and uniqueness of Nash equilibrium in Theorem \ref{thm:ex_and_un}, we now show that the condition \eqref{eq.condition_for_extreme_bis} necessarily leads to an extreme noncompetitive equilibrium. We start with an alternative characterisation of\eqref{eq.condition_for_extreme_bis}.
\begin{lem} \label{lem:condition_for_extreme}
Condition \eqref{eq.condition_for_extreme_bis} is equivalent to
\begin{equation} \label{eq.condition_for_extreme}
\beta_k \geq 1 + \frac{1}{\delta_k} \sum_{i \in I \setminus \set{k}} \delta_i (1+\beta_i)_+, \quad \text{for some} \quad k \in I.
\end{equation}  
Furthermore, \eqref{eq.condition_for_extreme} can hold for at most one trader $k \in I$.
\end{lem}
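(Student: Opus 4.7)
The plan is to use the same index $k$ to convert each of \eqref{eq.condition_for_extreme_bis} and \eqref{eq.condition_for_extreme} into the equivalent form $2\delta_k \beta_k \geq \sum_{i \in I} \delta_i(1+\beta_i)_+$, and then derive uniqueness via a symmetric two-inequality contradiction. The unifying observation is that the distinguished index must satisfy $\beta_k \geq 0$, so that $(1+\beta_k)_+ = 1+\beta_k$ and the $k$-th summand in the $(\cdot)_+$-sum can be handled without case analysis.

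For the forward implication \eqref{eq.condition_for_extreme_bis} $\Rightarrow$ \eqref{eq.condition_for_extreme}, I would pick $k \in \argmax_{i \in I}(\delta_i \beta_i)$. Then \eqref{eq.condition_for_extreme_bis} yields $2\delta_k\beta_k \geq \sum_{i \in I} \delta_i(1+\beta_i)_+ \geq 0$, forcing $\beta_k \geq 0$ and hence $(1+\beta_k)_+ = 1+\beta_k$. Splitting off the $k$-th summand and rearranging produces $\delta_k(\beta_k-1) \geq \sum_{i \in I \setminus \set{k}}\delta_i(1+\beta_i)_+$, which is exactly \eqref{eq.condition_for_extreme}. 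For the reverse implication, note that the right-hand side of \eqref{eq.condition_for_extreme} is at least $1$, so $\beta_k \geq 1$; adding $\delta_k(1+\beta_k)=\delta_k(1+\beta_k)_+$ to both sides gives $2\delta_k\beta_k \geq \sum_{i \in I}\delta_i(1+\beta_i)_+$, and since $\max_{i \in I}(\delta_i\beta_i) \geq \delta_k\beta_k$, \eqref{eq.condition_for_extreme_bis} follows.

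For uniqueness, suppose by contradiction that \eqref{eq.condition_for_extreme} holds for two distinct indices $k_1 \neq k_2$. Each gives $\beta_{k_j} \geq 1$, so the positive parts pose no issue. Retaining only the $k_2$-th term on the right-hand side of the $k_1$-inequality (and vice versa) produces $\delta_{k_1}(\beta_{k_1}-1) \geq \delta_{k_2}(1+\beta_{k_2})$ and $\delta_{k_2}(\beta_{k_2}-1) \geq \delta_{k_1}(1+\beta_{k_1})$; adding these yields $-2(\delta_{k_1}+\delta_{k_2}) \geq 0$, contradicting $\delta_i > 0$. I do not anticipate any serious obstacle: the lemma is essentially a bookkeeping exercise, and the only subtle point is choosing $k = \argmax_{i \in I}(\delta_i\beta_i)$ in the forward direction so as to avoid case analysis on the sign of $1+\beta_k$.
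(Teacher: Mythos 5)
Your proposal is correct and follows essentially the same route as the paper: both directions reduce to the common form $2\delta_k\beta_k \geq \sum_{i \in I}\delta_i(1+\beta_i)_+$ (with $k$ realising the maximum in the forward direction), and the uniqueness argument via adding the two pairwise inequalities to get $-2(\delta_{k_1}+\delta_{k_2})\geq 0$ is identical to the paper's. No gaps.
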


\begin{proof}
Start by assuming that \eqref{eq.condition_for_extreme} holds, and rewrite it as $\delta_k \beta_k \geq \delta_k + \sum_{i \in I \setminus \set{k}}\delta_i(1+\beta_i)_+$. Since $\beta_k > 1$, which implies that $1 + \beta_k = (1 + \beta_k)_+$, adding $\delta_k \beta_k$ on both sides of the previous inequality and simplifying, we obtain $2 \delta_k \beta_k \geq \sum_{i \in I} \delta_i (1+\beta_i)_+$, from which \eqref{eq.condition_for_extreme_bis} follows. Conversely, \eqref{eq.condition_for_extreme_bis} holds if and only if $2 \delta_k \beta_k  \geq \sum_{\iii}\delta_i(1+\beta_i)_+$ holds for some $k \in I$. In this case, $\beta_k \geq 0 > -1$, and subtracting $\delta_k (1 + \beta_k) = \delta_k (1 + \beta_k)_+$ we obtain $\delta_k (\beta_k - 1)  \geq \sum_{i \in I \setminus \set{k}} \delta_i (1+\beta_i)_+$, which is  \eqref{eq.condition_for_extreme}.
	
Assume now that \eqref{eq.condition_for_extreme} held for two traders, say trader $k \in I$ and $l \in I$ with $k \neq l$. Then, 
\[
\delta_k (\beta_k - 1) \geq \sum_{i \in I \setminus  \set{k}} \delta_i (1+\beta_i)_+ \geq \delta_\ell ( 1 + \beta_\ell) \quad\text{and}\quad \delta_l (\beta_l - 1) \geq \sum_{i \in I \setminus  \set{l}} \delta_i (1+\beta_i)_+ \geq \delta_k ( 1 + \beta_k).
\]
Adding up these inequalities we obtain $- 2 (\delta_k+\delta_l) \geq 0$, which contradicts the fact that $\delta_k > 0$ and $\delta_l > 0$. We conclude that \eqref{eq.condition_for_extreme} can hold for at most one trader.
\end{proof}

The next result gives a complete characterisation of extreme noncompetitive equilibrium; in particular, it shows that at most one trader---and, in fact, exactly the trader $k \in I$ for which \eqref{eq.condition_for_extreme} holds---may behave as risk-neutral in noncompetitive equilibrium. Note that we do \emph{not} assume \eqref{eq:beta_more_one_up_to_one} for Proposition \ref{pr.extreme}, as it was also not needed for Lemma \ref{lem:condition_for_extreme}

\begin{prop} \label{pr.extreme}
An extreme noncompetitive equilibrium (i.e., with $\rg_I = \infty$) exists if and only if \eqref{eq.condition_for_extreme_bis}, or equivalently \eqref{eq.condition_for_extreme}, is true. In this case, we have $\rg_k = \infty$ for the unique trader $k \in I$ such that \eqref{eq.condition_for_extreme} holds, and $\rg_i = \delta_i (1 + \beta_i)_+$ for $i \in I \setminus \set{k}$. In particular, the previous is the unique extreme noncompetitive equilibrium under the validity of \eqref{eq.condition_for_extreme_bis}.
\end{prop}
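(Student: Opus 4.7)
\medskip

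\textbf{Proof Plan.} My approach is to exploit the limit best responses already recorded right after Proposition \ref{p.best_response}: namely $\rb(\infty) = \delta_i (1 + \beta_i)_+$ for any trader, and $\rb(0) = \infty$ whenever $\beta_i > 1$. The proof has a necessity direction (analysing any extreme equilibrium) and a sufficiency direction (constructing one when \eqref{eq.condition_for_extreme} holds), both driven by these identities.

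\smallskip

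\emph{Step 1 (at most one infinite component).} Suppose $(\rgi)_{\iii}$ is extreme, so $\rg_I = \infty$, and pick $k \iii$ with $\rgk = \infty$. I first observe that no other trader $\ell \neq k$ can also have $\rg_\ell = \infty$: for such an $\ell$, $\rg_{-\ell}$ would contain $\rgk = \infty$ as a summand, hence $\rg_{-\ell} = \infty$, and the best-response identity $\rb[\ell](\infty) = \delta_\ell (1 + \beta_\ell)_+ < \infty$ would contradict $\rg_\ell = \infty$. Therefore exactly one index $k$ carries an infinite best response.

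\smallskip

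\emph{Step 2 (determining the other components).} For every $i \in I \setminus \set{k}$, the presence of $\rgk = \infty$ in $\rg_{-i}$ forces $\rg_{-i} = \infty$, and the equilibrium condition $\rgi = \rb(\rg_{-i})$ together with $\rb[i](\infty) = \delta_i (1+\beta_i)_+$ pins down $\rgi = \delta_i(1+\beta_i)_+$. Consequently $\rg_{-k} = \sum_{i \in I \setminus \set{k}} \delta_i (1 + \beta_i)_+$, which is a finite nonnegative number.

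\smallskip

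\emph{Step 3 (necessary and sufficient condition on $k$).} Now $\rgk = \infty$ must itself be the best response to $\rg_{-k}$. When $\rg_{-k} > 0$, Proposition \ref{p.best_response} says $\rb[k](\rg_{-k}) = \infty$ iff $\beta_k \geq 1 + \rg_{-k}/\delta_k$, which rearranges to exactly \eqref{eq.condition_for_extreme}. When $\rg_{-k} = 0$, every $i \neq k$ satisfies $\beta_i \leq -1$; since $\sum_{\iii} \beta_i = 1$ and there are at least two traders, this forces $\beta_k > 1$, so the prescription $\rb[k](0) = \infty$ applies and \eqref{eq.condition_for_extreme} is again automatically satisfied (its right-hand side is zero). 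Therefore the existence of an extreme equilibrium is equivalent to \eqref{eq.condition_for_extreme} holding for some $k$, which by Lemma \ref{lem:condition_for_extreme} is equivalent to \eqref{eq.condition_for_extreme_bis}.

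\smallskip

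\emph{Step 4 (uniqueness and sufficiency).} The uniqueness of the index $k$ is immediate from the last sentence of Lemma \ref{lem:condition_for_extreme}; combined with Steps 1 and 2 this gives uniqueness of the extreme equilibrium. Conversely, assuming \eqref{eq.condition_for_extreme} for the unique such $k$, I verify that the candidate $\rgk = \infty$, $\rgi = \delta_i(1+\beta_i)_+$ for $i \neq k$, is indeed Nash: for $i \neq k$ one has $\rg_{-i} = \infty$ so $\rb[i](\rg_{-i}) = \delta_i(1+\beta_i)_+ = \rgi$, and for $i = k$ the inequality \eqref{eq.condition_for_extreme} is exactly the threshold condition ensuring $\rb[k](\rg_{-k}) = \infty$. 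No step here is technically demanding; the only subtle point, and the one that guides the whole argument, is to recognise that the finiteness of the limit best response $\rb(\infty)$ is what prevents two traders from simultaneously acting risk-neutrally in equilibrium.
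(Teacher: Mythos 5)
Your proof is correct and follows essentially the same route as the paper's: both arguments rest on the limit best response $\rb(\infty)=\delta_i(1+\beta_i)_+$ from \eqref{eq.best_response_extreme} together with the threshold in \eqref{eq.rb} for an infinite best response, checking necessity and sufficiency directly and invoking Lemma \ref{lem:condition_for_extreme} for uniqueness of the index $k$ (your explicit treatment of the edge case $\rg_{-k}=0$ is a welcome extra detail the paper leaves implicit). The only (harmless) slip is the parenthetical in Step 3: when $\rg_{-k}=0$ it is the \emph{sum} on the right-hand side of \eqref{eq.condition_for_extreme} that vanishes, so the right-hand side equals one, and the condition holds because $\beta_k>1$.
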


\begin{proof}
First, assume that a Nash equilibrium with $\rg_I = \infty$ exists. Since $\# I < \infty$, there exists $k \in I$ with $\rg_k = \infty$. According to \eqref{eq.best_response_extreme}, for any trader $i \in I \setminus \{k\}$, it holds that $\rg_i = \delta_i (1+\beta_i)_+$. Therefore, for $\rg_k = \infty$ to be the best response for trader $k \in I$, \eqref{eq.rb} gives $\beta_k \geq 1 + \pare{1 / \delta_k} \sum_{i \in I \setminus \set{k}} \delta_i (1+\beta_i)_+$. It follows that \eqref{eq.condition_for_extreme} is a necessary condition for existence of an extreme noncompetitive equilibrium.
	
Conversely, if \eqref{eq.condition_for_extreme} holds, and defining $\rg_k = \infty$ and $\rg_i = \delta_i (1 + \beta_i)_+$ for $i \in I \setminus \set{k}$, it is immediate from \eqref{eq.rb} and \eqref{eq.best_response_extreme} to check that the previous is indeed a Nash equilibrium.
\end{proof}

We proceed with some discussion, where we assume that \eqref{eq.condition_for_extreme} holds. In view of Proposition \ref{pr.extreme} and the relations in \eqref{eq:nash_equil}, at the extreme equilibrium trader $k \in I$ undertakes all market risk, since $\qg_k = a_I - a_k$, and the rest of the traders exchange all their market risk (i.e., $\qg_i = -a_i$, for each $i \in I \setminus \{ k \}$) at zero cost, since pricing is done in a risk-neutral way ($\pg = 0$). In particular, the post Nash-transaction beta of trader $k \in I$ reduces to one, and all other traders become market-neutral.

While this transaction is not socially optimal, participating traders increase their utilities; otherwise, equilibrium would not form. Straightforward calculations give the individual utility gains at the extreme equilibrium: $\U_k \pare{E_k + \inner{\qg_k}{\secu - \pg}} = u_k + \left(\inner{a_k}{C a_k}-\inner{a_I}{Ca_I} \right) / 2\delta_k$ and $\U_i \pare{E_i + \inner{\qg_i}{\secu - \pg}}=u_i+\inner{a_i}{Ca_i}/2\delta_i$, for each $i\in I \setminus \{ k \}$. In particular, the difference of utility gains in \eqref{eq.comparison_utilities} between the extreme Nash equilibrium and the competitive one equal
\begin{equation}\label{eq.extreme individual gains}
\DU_k = \frac{\inner{a_I}{C a_I}}{2\delta_k} \left[ \lambda_k (2\beta_k-\lambda_k)-1 \right], \quad \text{and} \quad \DU_i = \frac{\inner{a_I}{Ca_I}}{2\delta_i}\lambda_i(2\beta_i-\lambda_i),\  \forall i\in I \setminus \{k\}.
\end{equation}
It follows by straightforward algebra that
\[
\text{Risk-sharing inefficiency} \dfn \sum_{\iii}\DU_i = - \frac{\inner{a_I}{C a_I}}{2\delta_I} \frac{1-\lambda_{k}}{\lambda_k}.
\]
As expected, there is a reduction of the total utility gain when traders behave strategically regarding the elasticity of their submitted demand functions. However, utility gains may be higher in the noncompetitive equilibrium for \emph{individual} traders. From \eqref{eq.extreme individual gains}, we conclude that, in extreme noncompetitive equilibrium, traders that benefit from the market's thinness are the ones with sufficiently high initial exposure to market risk: for trader $k \in I$, when $\beta_k > (1 + \lambda_k^2) / 2 \lambda_k$ and for traders $i\in I\setminus\{k\}$ when $\beta_i > 2\lambda_i$.\footnote{As easy examples show, condition \eqref{eq.condition_for_extreme} does not necessarily imply $\beta_k > (1+\lambda_k^2)/2\lambda_k$. In the special two-trader case $I=\{0,1\}$ with $k = 0$, condition \eqref{eq.condition_for_extreme} is equivalent to $\beta_0 > 2 - \lambda_0$, which always implies $\beta_0 > (1+\lambda_0^2)/2\lambda_0$ when $\lambda_0 > 1/3$. Still in the same two-trader case with $k=0$, condition \eqref{eq.condition_for_extreme} implies that $\beta_1 < 2 \lambda_1$: in the bilateral extreme equilibrium, only the trader that acts as risk neutral could benefit from the market's thinness.} 

The above quantitative discussion has the following qualitative attributes. Under condition \eqref{eq.condition_for_extreme}, in the noncompetitive extreme equilibrium trader $k \in I$ reduces market-risk exposure to one but pays zero premium to other traders. If the market's equilibrium was competitive, trader $k \in I$ would decrease the post-transaction beta even more, to $\lambda_0$ instead to one, but the premium would be strictly positive according to the decomposition \eqref{eq: gain in equilibrium}. The benefit of zero risk premium prevails the lower reduction of risk if $\beta_k$ is sufficiently large. On the other hand, the rest of the traders sell all their market-risk exposure at zero premium. For those traders with low initial beta (more precisely, $\beta_i<\lambda_i/2$), the noncompetitive equilibrium leaves them worse off than the competitive one. This stems from the fact that in competitive equilibrium  traders with low initial beta obtain premium from traders who are overexposed to market risk, something that does not occur in the noncompetitive extreme equilibrium. However, for traders with $\beta_i\geq\lambda_i/2$, the noncompetitive equilibrium is preferable since they also benefit from the zero risk premium.

\emph{To recapitulate: traders that obtain more utility from the extreme noncompetitive equilibrium are the ones with sufficiently high initial exposure to market risk.}

\section{Bilateral Strategic Risk Sharing}\label{sec.bilateral}

\subsection{The case of essentially two strategic traders}\label{subsec.two_agents}

As pointed out in the introductory section, the two-trader case is of special interest since the majority of the OTC transactions consists of only two institutions, or one institution and a client.

Since traders with pre-transaction beta less or equal to $-1$ always sell all their risk at equilibrium, a risk-sharing game is essentially between two traders if exactly two of them (for concreteness's sake, traders $0 \in I$ and $1 \in I$) have pre-transaction beta larger than $-1$. Then, traders 0 and 1 are the only ones to submit demands with non-zero elasticity. In view of the general analysis of \S \ref{subsec:extreme_case}, we shall only treat the case of non-extreme equilibrium, i.e., when \eqref{eq.condition_for_extreme_bis} fails. Straightforward algebra yields that, in the present case, failure of \eqref{eq.condition_for_extreme_bis} is equivalent to the following simplified inequality
\begin{equation} \label{eq.nash_two_agents}
|\lambda_0\beta_0-\lambda_1\beta_1|<\lambda_0+\lambda_1.
\end{equation}
If $I = \{0,1\}$, and recalling that $\beta_0+\beta_1=\lambda_0+\lambda_1=1$ in this case, inequality \eqref{eq.nash_two_agents} is equivalent to $-\lambda_i<\beta_i<2-\lambda_i$ for both $i \in \{0,1\}$.

\begin{prop} \label{pr.nash_two_agents}
	Assume that $\beta_0 > -1$, $\beta_1>-1$, $\beta_i \leq -1$ for $\iii \setminus \{0,1\}$, and impose \eqref{eq.nash_two_agents}. Then a noncompetitive equilibrium is unique, satisfies $\rg_i = 0$ for $\iii \setminus \{0,1\}$, as well as
	\begin{equation}\label{eq.rb_two}
	\rg_0 = \delta_0\frac{2\lambda_1(\beta_0+\beta_1)}{(\lambda_0+\lambda_1)+(\lambda_1\beta_1-\lambda_0\beta_0)}, \qquad \rg_1 = \delta_1\frac{2\lambda_0(\beta_0+\beta_1)}{(\lambda_0+\lambda_1)+(\lambda_0\beta_0-\lambda_1\beta_1)}.
	\end{equation}
\end{prop}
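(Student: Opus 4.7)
The approach is to reduce the problem to a two-equation system in $(\rg_0,\rg_1)$ and then to solve it by a convenient one-dimensional change of variable. First, I would invoke Proposition \ref{p.best_response}: since $\beta_i \leq -1$ for every $\iii \setminus \{0,1\}$, the unique best response of any such trader is identically $0$, regardless of $\rtoni$. Hence, in any Nash equilibrium necessarily $\rg_i = 0$ for all $\iii \setminus \{0,1\}$, so $\rg_I = \rg_0 + \rg_1$ and only traders $0$ and $1$ play a non-trivial role.

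Next, I would show that \eqref{eq.nash_two_agents} forces the equilibrium to lie in the non-extreme regime. In the present setting $(1+\beta_i)_+ = 0$ for $\iii \setminus \{0,1\}$, and $\max_{\iii}(\delta_i \beta_i)$ is attained at $i=0$ or $i=1$; a direct simplification then shows that failure of \eqref{eq.condition_for_extreme_bis}, upon normalising by $\delta_I$, is precisely \eqref{eq.nash_two_agents}. By Proposition \ref{pr.extreme} this rules out the existence of an extreme noncompetitive equilibrium, so $\rg_0, \rg_1 \in (0,\infty)$ and each of them must satisfy the non-extreme branch of Proposition \ref{p.best_response}, equivalently \eqref{eq.Nash_couples} for $i \in \{0,1\}$.

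The main computational step is then solving this coupled pair. Introducing $u \dfn \rg_0/\rg_I \in (0,1)$, the equations \eqref{eq.Nash_couples} for $i=0$ and $i=1$ rearrange to the single chain
\[
\rg_I \, u(1-u) \;=\; \delta_0\bpare{1+\beta_0 - 2u} \;=\; \delta_1\bpare{2u - 1 + \beta_1}.
\]
Equating the two right-hand sides produces a \emph{linear} equation in $u$, whose unique solution is $u = [(\lambda_0+\lambda_1) + (\lambda_0\beta_0 - \lambda_1\beta_1)]/[2(\lambda_0+\lambda_1)]$, with $u \in (0,1)$ being guaranteed by \eqref{eq.nash_two_agents}. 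Substituting back and using the identity $1+\beta_0 - 2u = \lambda_1(\beta_0+\beta_1)/(\lambda_0+\lambda_1)$, one recovers $\rg_0 = \rg_I u$ in the form stated in \eqref{eq.rb_two}; the formula for $\rg_1$ follows by symmetry. Uniqueness is automatic, since no extreme equilibrium exists (by the previous paragraph) and in the non-extreme regime $u$ is pinned down by a linear equation. The only real obstacle is the algebraic bookkeeping needed to massage the solution into the symmetric closed form \eqref{eq.rb_two}; no further conceptual ingredient is required.
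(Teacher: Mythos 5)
Your proposal is correct and follows essentially the same route as the paper: traders with $\beta_i\leq -1$ drop out by Proposition \ref{p.best_response}, the remaining two equations from \eqref{eq.Nash_couples} are combined (your equating of $\delta_0(1+\beta_0-2u)$ with $\delta_1(2u-1+\beta_1)$ is exactly the paper's subtraction step) to yield a linear equation for the fraction $u=\rg_0/\rg_I$, and back-substitution gives \eqref{eq.rb_two}, with \eqref{eq.nash_two_agents} ensuring $u\in(0,1)$ and hence positivity and finiteness. The only (harmless) addition is your explicit appeal to Proposition \ref{pr.extreme} to rule out extreme equilibria, which the paper handles in the discussion immediately preceding the proposition.
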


\begin{proof}
As already mentioned, Proposition \ref{p.best_response} implies that the best response for each trader $\iii$ with $\beta_i \leq -1$ is zero; for traders 0 and 1, $\rg_0$ and $\rg_1$ should satisfy \eqref{eq.Nash_couples}. In this case of essentially two traders, the system takes the form of the following two equations
\begin{equation}\label{eq.two traders game}
(2\delta_0+\rg_1)\rg_0=\delta_0(1+\beta_0)(\rg_0+\rg_1) \quad\text{ and }\quad (2\delta_1+\rg_0)\rg_1=\delta_1(1+\beta_1)
	(\rg_0+\rg_1).
\end{equation}
Subtracting the first equation from the second and dividing by $\rg_I=\rg_0+\rg_1$ gives
\begin{equation}\label{eq.two traders game1}
2(\delta_1\kg_1-\delta_0\kg_0)=\delta_1(1+\beta_1)-\delta_0(1+\beta_0),
\end{equation}
where $\kg_i\equiv\rg_i/(\rg_0+\rg_1)$ for $i \in \set{0,1}$. Since $\kg_1 = 1-\kg_0$, \eqref{eq.two traders game1} is a simple linear equation of $\kg_0$ whose unique solution is 
\begin{equation}\label{eq.two traders game2}
\kg_0=\frac{1}{2}+\frac{\lambda_0\beta_0-\lambda_1\beta_1}{2(\lambda_0+\lambda_1)}.
\end{equation}
The first equation in \eqref{eq.two traders game} can be written as $(2\delta_0+\rg_1)\kg_0=(1+\beta_0)\delta_0$, which together with \eqref{eq.two traders game2} implies that $\rg_1$ should be given as in \eqref{eq.rb_two}. A symmetric argument shows that $\rg_0$ should also be given as in \eqref{eq.rb_two}. Finally, note that assumption \eqref{eq.nash_two_agents} and the imposed condition $\beta_i \leq -1$, for each $i \in I \setminus \set{0,1}$ guarantee that both $\rg_0$ and $\rg_1$ are strictly positive and finite. 
\end{proof}

At the above noncompetitive equilibrium, prices are given by $\pg = -C a_I/(\rg_0+\rg_1)$, while the allocation is $\qg_i=a_I\rg_i/(\rg_0+\rg_1)-a_i$ for each $\iii$, i.e.~only trader 0 and 1 are left with market risk after the transaction. 

\begin{rem}
As can be readily checked, a combination of Proposition \ref{pr.extreme}, Theorem \ref{thm:ex_and_un} and Proposition \ref{pr.nash_two_agents} completely covers all possible configurations for trades including up to three players. On the other hand, one may find a configuration of four traders that is not covered by the results; for example, with $I = \set{0,1,2,3}$ and $\delta_i = 1$ for all $i \in I$, let $\beta_0 = \beta_1=2$, $\beta_2 = 0$, $\beta_3 = -3$.
\end{rem}

For the rest of this section we focus our analysis and discussion on bilateral transactions, where we assume that $I=\{0,1\}$. For the reader's convenience, we note the following result stemming immediately from Proposition \ref{pr.nash_two_agents}.

\begin{cor}\label{cor:bilateral}
When $I=\{0,1\}$ and under inequality \eqref{eq.nash_two_agents}, there is a unique linear noncompetitive equilibrium given by
\[
(\rg_0,\rg_1)= \pare{\delta_0\frac{2\lambda_1}{\lambda_1+\beta_{1}},\delta_1\frac{2\lambda_0}{\lambda_0+\beta_{0}}}.
\]
The corresponding price-allocation equilibrium is given by 
\begin{equation}\label{eq.p_Nash_two_agents}
\pg =  - \frac{ \delta_I (\lambda_0+\beta_{0}) (\lambda_1+\beta_{1})}{4 \delta_0 \delta_1} C a_I = \frac{(\lambda_0+\beta_{0})(\lambda_1+\beta_{1})}{4\lambda_0\lambda_1} \pp,
\end{equation}
and
\[
\qg_i=\frac{\lambda_i+\beta_i}{2}a_I-a_i=\frac{\pq_i}{2}+\frac{\beta_ia_I-a_i}{2}, \quad i \in \set {0,1}.
\] 
\end{cor}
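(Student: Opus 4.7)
The plan is to obtain the corollary as a direct specialisation of Proposition \ref{pr.nash_two_agents} to the case $I=\set{0,1}$, where the two normalising identities $\beta_0+\beta_1=1$ and $\lambda_0+\lambda_1=1$ collapse the formulas significantly. First, I would observe that, with $I=\set{0,1}$, the set $I\setminus\set{0,1}$ is empty; hence the hypothesis ``$\beta_i\leq -1$ for $\iii\setminus\set{0,1}$'' of Proposition \ref{pr.nash_two_agents} is vacuous, while \eqref{eq.nash_two_agents} is exactly the assumption of the corollary. The proposition therefore applies and provides $(\rg_0,\rg_1)$ via \eqref{eq.rb_two}.

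The core algebraic step is to rewrite the denominators appearing in \eqref{eq.rb_two}. Using $\lambda_0=1-\lambda_1$ and $\beta_0=1-\beta_1$, one computes
\begin{equation*}
(\lambda_0+\lambda_1)+(\lambda_1\beta_1-\lambda_0\beta_0)=1+\lambda_1\beta_1-(1-\lambda_1)(1-\beta_1)=\lambda_1+\beta_1,
\end{equation*}
and symmetrically $(\lambda_0+\lambda_1)+(\lambda_0\beta_0-\lambda_1\beta_1)=\lambda_0+\beta_0$. Substituting these into \eqref{eq.rb_two}, and using $\beta_0+\beta_1=1$ in the numerators, immediately yields the announced expressions $\rg_0=2\delta_0\lambda_1/(\lambda_1+\beta_1)$ and $\rg_1=2\delta_1\lambda_0/(\lambda_0+\beta_0)$. (Inequality \eqref{eq.nash_two_agents} ensures both denominators are strictly positive.)

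Next I would compute the total elasticity $\rg_I=\rg_0+\rg_1$. Factoring out $2\delta_0\delta_1/\delta_I$ via $\delta_0\lambda_1=\delta_1\lambda_0=\delta_0\delta_1/\delta_I$, putting over a common denominator, and using $(\lambda_0+\beta_0)+(\lambda_1+\beta_1)=2$, gives
\begin{equation*}
\rg_I=\frac{4\delta_0\delta_1}{\delta_I\,(\lambda_0+\beta_0)(\lambda_1+\beta_1)}.
\end{equation*}
Plugging this into the equilibrium price formula $\pg=-Ca_I/\rg_I$ of \eqref{eq:nash_equil} gives the first equality in \eqref{eq.p_Nash_two_agents}, and the second follows by comparing with $\pp=-Ca_I/\delta_I$ from \eqref{eq:p_comp_equil}.

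Finally, for the allocation I would compute the ratios $\rg_i/\rg_I$. A direct calculation shows $\rg_0/\rg_I=(\lambda_0+\beta_0)/2$, and symmetrically for index $1$. Inserting into $\qg_i=(\rg_i/\rg_I)a_I-a_i$ from \eqref{eq:nash_equil} yields $\qg_i=\frac{1}{2}(\lambda_i+\beta_i)a_I-a_i$, and splitting this as $\frac{1}{2}(\lambda_ia_I-a_i)+\frac{1}{2}(\beta_ia_I-a_i)$ gives the decomposition in terms of $\pq_i=\lambda_ia_I-a_i$ from \eqref{eq:q_comp_equil}. There is no real obstacle here; the only mild subtlety is keeping the identities $\beta_0+\beta_1=1$ and $\lambda_0+\lambda_1=1$ organised so that the denominator simplifications are transparent.
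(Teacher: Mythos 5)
Your proof is correct and follows exactly the route the paper intends: Corollary \ref{cor:bilateral} is stated as ``stemming immediately from Proposition \ref{pr.nash_two_agents}'', and your algebra (the denominator simplifications via $\beta_0+\beta_1=\lambda_0+\lambda_1=1$, the computation of $\rg_I$, and the ratios $\rg_i/\rg_I=(\lambda_i+\beta_i)/2$) checks out. The only item worth making explicit is that the remaining hypotheses $\beta_0>-1$ and $\beta_1>-1$ of the proposition are also automatic here, since \eqref{eq.nash_two_agents} with $I=\set{0,1}$ is equivalent to $-\lambda_i<\beta_i<2-\lambda_i$ for both $i$.
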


\begin{rem}\label{rem:competitive=Nash}
The only case where the allocation at noncompetitive equilibrium coincides with the competitive one is when $\beta_0 = \lambda_0$, which necessarily implies that $\beta_1=\lambda_1$ also holds. This equality, however, means that the competitive equilibrium is a trivial no-transaction equilibrium, since \eqref{eq:q_comp_equil} gives $\qg_0 = 0 = \qg_1$.
\end{rem}

As expected from the analysis of Section \ref{sec:best_response}, relatively higher initial exposure to market risk implies more higher submitted elasticity at the noncompetitive equilibrium: for each $i \in \set{0,1}$,
\[
\delta_i<\rg_i\quad\Leftrightarrow\quad\lambda_i<\beta_i\quad\Leftrightarrow\quad \lambda_{-i}>\beta_{-i}.
\]
In particular, the trader who reduces (resp., increases) exposure to market risk through the transaction submits a demand function with higher (resp., less) elasticity than the one that corresponds to that trader's risk tolerance.

The above analysis implies that the trader with higher initial exposure to market risk is willing to retain some of this risk in exchange of a lower risk premium. Correspondingly, the trader who undertakes further market risk through the transaction tends to behave in more risk averse way, hesitating to undertake more risk at the same risk premium. The direct outcome is that the volume of risk sharing is lower than the one obtained at competitive equilibrium, which leads to \emph{risk-sharing inefficiency}. In fact, simple calculations yield that the Nash post-transaction beta of trader $i \in I$ changes from $\beta_i$ to $(\lambda_i+\beta_i)/2$, instead of a competitive---and socially optimal---post-transaction beta of $\lambda_i$. In other words, for both traders the noncompetitive equilibrium transaction makes their betas exactly equal to the middle point between the initial and the socially optimal ones.


\begin{rem}\label{rem:competitive price=Nash price}
From \eqref{eq.p_Nash_two_agents}, we can easily see that $\pg=\pp$ holds if and only if $\lambda_0=\beta_0$ or $\lambda_0=(2-\beta_0)/3$. While the former case is the trivial one (with zero volume at any equilibrium), the latter gives a special non-trivial case where prices remain unaffected by the traders' strategic behaviour. In this case, the Nash post-transaction  beta is $(\lambda_i+\beta_i)/2=(1+\beta_i)/3 = \lambda_{-i}$ for both $i \in \set{0,1}$. 
\end{rem}

Similar to the decomposition of utility gains at competitive equilibrium in \eqref{eq: gain in equilibrium}, we decompose the corresponding utility gains at noncompetitive equilibrium for $i \in \set{0,1}$ as
\begin{equation}\label{eq: gain in equilibrium at nash}
\U_i \pare{E_i + \inner{\qg_i}{\secu - \pg}} =u_i+\underset{\text{profit/loss from random payoff}}{\underbrace{\frac{1}{2\delta_i}\inner{a_i}{Ca_i}-\left(\frac{\lambda_i+\beta_i}{2}\right)^2\frac{\inner{a_I}{Ca_I}}{2\delta_i}}}\quad\underset{\text{(signed) risk premium}}{\underbrace{-\frac{\beta_i-\lambda_i}{\delta_I}\inner{a_I}{Ca_I}L}}, 
\end{equation}
where $L \equiv (\beta_0+\lambda_0) (\beta_1+\lambda_1)/8\lambda_0\lambda_1$. The decompositions \eqref{eq: gain in equilibrium} and \eqref{eq: gain in equilibrium at nash} give an expression for the utility difference between the two equilibria $\DU_i$ defined in \eqref{eq.comparison_utilities}; to wit,
\begin{equation}\label{eq:utilities comparison}
\DU_i= \frac{\inner{a_I}{Ca_I}}{2\delta_i}\left[\lambda_i^2-\left(\frac{\lambda_i+\beta_i}{2}\right)^2\right]+ \frac{\beta_i-\lambda_i}{\delta_I}\inner{a_I}{Ca_I}(1-L),\quad i \in \set{0,1}.
\end{equation}

As was the case in extreme equilibrium discussed in \S \ref{subsec:extreme_case}, the difference of utility gains stems from two sources: the gain from sharing the random (risky) payoffs and the risk premium paid or received. Let assume without loss of generality that $\beta_0 < \lambda_0$ (or equivalently, that $\beta_1 > \lambda_1$), i.e., that trader 0 undertakes more market risk after the (competitive or not) transaction. Since noncompetitive risk-sharing beta reaches only halfway compared to competitive risk-sharing, there is less risk undertaken by trader 0. The risk premium received for undertaking market risk is higher than the one in competitive equilibrium if and only if $L > 1$, which holds in particular when $\lambda_0$ is close to one. 
When $\lambda_0$ is not close to one, the risk premium is lower and could absorb all the gain from the lower undertaken market risk. Hence, for traders who undertake market risk at the transaction and have risk preferences close to risk neutrality, the noncompetitive equilibrium is more beneficial.

On the other hand, trader 1 is selling market risk, with lower reduction of Nash post-transaction beta (a fact that decreases utility), while the premium is lower at Nash equilibrium if and only if $L < 1$. The difference $1-L$ is negative for $\lambda_1$ close to zero, and the total difference \eqref{eq:utilities comparison} for $i=1$ remains negative when $\beta_1<1$ for every value of $\lambda_1$. For fixed $\lambda_1$, $L$ is decreasing in $\beta_1$ (when $\beta_1>1-\lambda_1$) and the total difference \eqref{eq:utilities comparison} for $i=1$ is positive when $\beta_1$ is close to its upper bound $2-\lambda_1$.

Finally, it should be pointed out that when the risk preferences of trader 0 (i.e., the buyer of market risk) are close to risk neutrality (that is, when $\lambda_0$ is close to 1), the noncompetitive equilibrium is always better than the competitive one if and only if $|\beta_{0}|<1$ or, equivalently, when $0 < \beta_{1} < 2$. In particular, \eqref{eq:utilities comparison} and the discussion of extreme equilibrium in \S \ref{subsec:extreme_case} imply that 
\[
\lim_{\delta_0 \to \infty} \DU_0 = 
\left \{
\begin{array}{ll} 
\inner{a_I}{Ca_I}(1+\beta_0)(1-\beta_0)^2 / 8 \delta_{1}, & \text{if }\beta_0\in(-1,1); \\
0, & \text{otherwise}.
\end{array}
\right.
\] 
Therefore, within non-extreme Nash equilibrium, traders that obtain more utility in the noncompetitive equilibrium are the ones with risk preferences close to risk neutrality.

In overall, we conclude that \emph{in two-trader transactions, traders that benefit with more utility from the noncompetitive equilibrium are the ones with sufficiently high initial exposure to market risk, and traders with sufficiently high risk tolerance.}

\subsection{The effect of incompleteness in thin markets}\label{subsec:incompleteness}

As emphasised above, our model allows the market to be incomplete, in that the tradeable securities do not necessarily belong to the span of the traders' endowments. When traders' endowments are not securitised, risk-sharing through competitive trading of other securities is sub-optimal. The goal of this section is to examine the effect of market's incompleteness, both on aggregate and individual levels, when the risk-sharing is \emph{noncompetitive}. For this goal, we consider the indicative two-trader game, $I=\{0,1\}$. 

In order to examine the effect of market's incompleteness we compare two market settings: an incomplete one, and one where $\secu = E$. To highlight the effect of incompleteness, we assume that besides (lack of) completeness, the rest of the parameters are the same; in particular, risk aversions remain the same, and projected and actual betas are equal. We take into account the individual utility gains \eqref{eq: gain in equilibrium}, \eqref{eq: gain in equilibrium at nash} and utility difference \eqref{eq:utilities comparison}. For quantities pertaining to the complete market we use notation with superscript ``$o$'', that is, $(q_i^o,p^o)$ are the noncompetitive equilibrium allocations and price and $\hat{q}^o_i$ the allocation under competitive equilibrium. Straightforward calculations give the following decomposition of utility gains, in terms of gains in competitive equilibrium and the effect of market noncompetitiveness:
\begin{align*}
\text{Utility gain in incomplete setting}= \U_i \pare{E_i + \inner{\qg_i}{\secu - \pg}} - u_i &=\underset{\text{gain in competitive equilibrium}}{\underbrace{\frac{1}{2\delta_i} \abs{C^{1/2} \pq_i}^2 }}+\DU_i\\
\text{Utility gain in complete setting}=\U_i \pare{E_i + \inner{q_i^o}{E - p^o}} - u_i &=\underset{\text{gain in competitive equilibrium}}{\underbrace{\frac{1}{2\delta_i} \abs{\cov^{1/2}(E,E) \hat{q}^o_i}^2 }}+\DU^o_i.
\end{align*}
Based on the above, we notice the following: The first term represents the gains of the risk-sharing if the markets were competitive. In particular, we have that (see also Proposition 2.7 in \cite{Anth17})
\[
\abs{C^{1/2} \pq_i}^2=\cov(\secu,\lambda_iE_I-E_i)C^{-1}\cov(\secu,\lambda_iE_I-E_i)\leq \var(\lambda_iE_I-E_i)=\abs{\cov^{1/2}(E,E) q^o_i}^2,
\]
where equality holds if, and only if, $\secu$ belongs in the span of $E$. The above inequality means that, under a competitive market setting, each trader loses utility due to market's incompleteness. 

The effect of market's incompleteness on the noncompetitive transaction, after accounting for the differences in the competitive environment, is captured by the difference $\DU^o_i-\DU_i$. In view of \eqref{eq:utilities comparison}, we have 
\begin{equation}
\DU_i= \frac{\inner{a_I}{Ca_I}}{2\delta_i}\left[\lambda_i^2-\left(\frac{\lambda_i+\beta_i}{2}\right)^2 + 2\lambda_i(\beta_i-\lambda_i)(1-L)  \right],\quad i \in \set{0,1}.
\end{equation} 
Keeping the parameters $\beta_i,\lambda_i$ equal for the complete and incomplete market settings, the only difference stems from the term $\inner{a_I}{Ca_I}$. In the incomplete market setting this term equals $\cov(\secu,E_I)C^{-1}\cov(\secu,E_I)$,
while in the complete market setting it equals $\var(E_I)$. Since 
\begin{equation}\label{eq: variances}
\cov(\secu,E_I)C^{-1}\cov(\secu,E_I)\leq\var(E_I),
\end{equation}
market incompleteness decreases (resp.,~increases) the  utility gain (resp.,~loss) that is caused by the market's noncompetitiveness. In other words, traders that benefit from the noncompetitive market setting (i.e., those with high risk tolerance and/or high exposure to market risk), have their utility gains reduced by the fact that endowments are not securitised. More precisely, we have seen that traders with relatively high exposure to market risk behave as risk neutral in order to reduce their exposure to one without paying risk premium. When the market is complete the reduction of the risk is more effective, since the traders sell part of their endowments and not a security that is simply positively correlated with their endowments, as in the incomplete setting. Recall also that the utility gain of the traders with relatively lower risk aversion under noncompetitive setting stems from the lower (resp., higher) risk premium that they pay (resp., receive). From \eqref{eq: gain in equilibrium at nash}, we get that the risk premium is always higher in the complete market setting (see also \eqref{eq: variances}) and hence the aforementioned increase (resp., decrease) of risk premium is also higher in the competitive setting.

We may conclude that, although market's incompleteness reduces the aggregate efficiency of risk-sharing, it also reduces the differences of utility gains/losses among traders.


\appendix

\section{Proof of Theorem \ref{thm:ex_and_un}}\label{sec:appe}

Let us start with the case where both conditions \eqref{eq.condition_for_extreme} and \eqref{eq:beta_more_one_up_to_one} hold. Proposition \ref{pr.extreme}, together under the validity of \eqref{eq.condition_for_extreme_bis} which is equivalently to \eqref{eq.condition_for_extreme}, shows that the exists an extreme linear Nash equilibrium. This is, in fact, unique over \emph{extreme} linear Nash equilibria; indeed, note that \eqref{eq.condition_for_extreme} immediately implies $\beta_k > 1$. Let us assume that trader $k \in I$ is the \emph{only} trader with beta greater than one, i.e., that $\beta_i \leq 1$ for all $i \in I \setminus \set{k}$. Let $(\rg_i)_{\iii}$ be \emph{any} linear noncompetitive equilibrium in terms of Definition \ref{d.Nash}. According to \eqref{eq.rb}, $\beta_i \leq 1$ implies that  $\rg_i \leq \delta_i (1+\beta_i)_+$, for all $i \in I \setminus \set{k}$. But then,
\[
\beta_k \geq 1 + \frac{1}{\delta_k} \sum_{i \in I \setminus \set{k}} \delta_i (1+\beta_i)_+ \geq 1 + \frac{\rg_{-k}}{\delta_k}.
\]
By \eqref{eq.rb} again, it follows that $\rg_k = \infty$, and applying \eqref{eq.rb} once again, we have $\rg_i = \delta_i (1+\beta_i)_+$, for all $i \in I \setminus \set{k}$, which establishes uniqueness of the extreme Nash equilibrium over \emph{all} possible linear Nash equilibria of Definition \ref{d.Nash}.

\smallskip

Having dealt with the case of extreme equilibrium, until the end of the proof we shall assume that \eqref{eq:beta_more_one_up_to_one} holds but \eqref{eq.condition_for_extreme} fails. Without loss of generality, let trader $0 \in I$ have the maximal pre-transaction beta: $\beta_i \leq \beta_0$ for all $i \in I \setminus \set{0}$. In view of Lemma \ref{lem:condition_for_extreme}, we then have that, necessarily,
\begin{equation} \label{eq:cond}
-1 < \beta_0 <  1 + \frac{1}{\delta_0 } \sum_{\iii \setminus \set{0}} \delta_i (1 + \beta_i)_+.
\end{equation}
Define the set
\[
J \dfn \set{\iii \setminus \set{0} \such -1 < \beta_i \leq 1}.
\]
The set $J_0 \dfn J \cup \set{0}$ contains all traders that will eventually submit demand functions with non-zero elasticity. Note that \eqref{eq:cond} implies that $J \neq \emptyset$; indeed, if $J = \emptyset$, then $\beta_0 = 1 - \sum_{i \in I \setminus \set{0}}\beta_i > 1$, and \eqref{eq:cond} would fail, since the quantity at the right-hand side would equal 1.

A Nash equilibrium exists if and only if $\rg_i = 0$ holds for all $i \in I \setminus J_0$, while 
\[
\pare{2 + \frac{\rg_I - \rgi}{\delta_i} } \frac{\rgi}{\rg_I} =  1 + \beta_i, \quad \forall \iij,
\]
following from \eqref{eq.Nash_couples}. Given $\rg_I > 0$, $\rg_i$ for $\iij$ satisfies the quadratic equation
\begin{equation}\label{eq.quadratic_eq_proof}
\frac{1}{2} (\rg_i)^2 - \pare{\delta_i + \rg_I / 2} \rgi + \delta_i ( 1 + \beta_i) \rg_I/2 = 0, \quad \forall \iij.
\end{equation}
The discriminant is equal to $\pare{\delta_i + \rg_I / 2}^2 - \delta_i ( 1 + \beta_i) \rg_I$, which, since $-1 < \beta_i \leq 1$, is always (regardless of the value of $\rg_I$) nonnegative. The two roots of equation \eqref{eq.quadratic_eq_proof} are $\delta_i + \rg_I / 2 \pm \sqrt{\pare{\delta_i + \rg_I / 2}^2 - \delta_i ( 1 + \beta_i) \rg_I}$. Note that since
\begin{align*}
\delta_i + \rg_I / 2 + \sqrt{\pare{\delta_i + \rg_I / 2}^2 - \delta_i ( 1 + \beta_i) \rg_I} &\geq \delta_i + \rg_I / 2 + \sqrt{\pare{\delta_i + \rg_I / 2}^2 - 2 \delta_i \rg_I}  \\
&= \delta_i + \rg_I / 2 + |\rg_I / 2 - \delta_i| \geq \rg_I,
\end{align*}
and $\rg_0$ has to be strictly positive, it holds that $\rg_i < \rg_I$ for each $i \in J$. Hence, the only root that is acceptable, i.e., the only nonnegative root is  
\[
\rg_i = \delta_i + \rg_I / 2 - \sqrt{\pare{\delta_i + \rg_I / 2}^2 - \delta_i ( 1 + \beta_i) \rg_I}, \quad \forall \iij.
\]
(Recall that our Definition of noncompetitive equilibrium considers linear demand functions with nonpositive slopes.) In other words, and upon defining the function $\phi_i : (0, \infty) \mapsto \Real$ via
\[
\phi_i(x) \dfn \delta_i + x/2 - \sqrt{\pare{\delta_i + x / 2}^2 - \delta_i ( 1 + \beta_i) x}, \quad x > 0,
\]
we should have $\rg_i = \phi_i(\rg_I)$ for all $\iij$. The next result gives some necessary properties on $\phi_i$ for $\iij$.

\begin{lem} \label{lem:phi_i}
	Let $\iij$. Then, $\phi_i(0+) = 0$, $\phi'_i(0+) = (1 + \beta_i)/2$. Furthermore, $\phi_i$ is concave, nondecreasing, and such that $\phi_i(\infty) = \delta_i (1 + \beta_i)$.
\end{lem}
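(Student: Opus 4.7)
The plan is to base the whole argument on the single algebraic identity that the discriminant can be rewritten as
\[
g_i(x) \dfn (\delta_i + x/2)^2 - \delta_i(1+\beta_i)x = \delta_i^2 - \delta_i\beta_i x + x^2/4,
\]
so that $\phi_i(x) = (\delta_i + x/2) - \sqrt{g_i(x)}$. The key trick, which makes both limits and the right derivative at $0$ transparent, is to multiply by the conjugate:
\[
\phi_i(x) = \frac{(\delta_i + x/2)^2 - g_i(x)}{(\delta_i + x/2) + \sqrt{g_i(x)}} = \frac{\delta_i(1+\beta_i)\, x}{(\delta_i + x/2) + \sqrt{g_i(x)}}.
\]
From this representation the three non-calculus claims follow instantly: as $x\downarrow 0$ the denominator tends to $2\delta_i$ and the numerator to $0$, so $\phi_i(0+) = 0$; dividing numerator and denominator by $x$ and passing to the limit gives
\[
\phi_i'(0+) = \lim_{x\downarrow 0}\frac{\phi_i(x)}{x} = \frac{\delta_i(1+\beta_i)}{2\delta_i} = \frac{1+\beta_i}{2};
\]
and as $x\to\infty$ both $\delta_i + x/2$ and $\sqrt{g_i(x)}$ are $x/2 + O(1)$, so the denominator is $x + O(1)$ and $\phi_i(\infty) = \delta_i(1+\beta_i)$.

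For monotonicity and concavity I would differentiate directly. With $g_i'(x) = x/2 - \delta_i\beta_i$ and $g_i''(x) = 1/2$, one has $\phi_i'(x) = 1/2 - g_i'(x)/(2\sqrt{g_i(x)})$, so $\phi_i' \geq 0$ reduces to $\sqrt{g_i(x)} \geq g_i'(x)$. This is automatic when $g_i'(x) \leq 0$, and otherwise follows from the elementary identity
\[
g_i(x) - g_i'(x)^2 = \delta_i^2(1-\beta_i^2) \geq 0,
\]
which is precisely where the hypothesis $|\beta_i|\leq 1$ enters. Differentiating once more and applying the quotient rule gives $\phi_i''(x) = -[\,2g_i(x)g_i''(x) - g_i'(x)^2\,]/(4 g_i(x)^{3/2})$; since $2g_i''(x) = 1$, the numerator equals the same quantity $\delta_i^2(1-\beta_i^2) \geq 0$, so $\phi_i'' \leq 0$ and concavity follows.

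There is no real obstacle here: once the identity $g_i - (g_i')^2 = \delta_i^2(1-\beta_i^2)$ is spotted, both monotonicity and concavity drop out of it simultaneously, and the conjugate form of $\phi_i$ packages all three of the non-derivative assertions into routine limits. The only thing to be a little careful about is noting that $g_i(x) > 0$ on $(0,\infty)$ so that $\sqrt{g_i}$ is smooth there, which itself follows from $|\beta_i|\leq 1$ via the same algebraic identity.
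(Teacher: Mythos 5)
Your argument is essentially the paper's: both compute $\phi_i'$ and $\phi_i''$ explicitly, both reduce concavity to the identity $g_i - (g_i')^2 = \delta_i^2(1-\beta_i^2) \geq 0$, and both use the conjugate trick for $\phi_i(\infty)$ (you also use it, slightly more elegantly, for $\phi_i(0+)$ and $\phi_i'(0+)$; the paper instead deduces nondecreasingness from concavity together with $\phi_i(0+)=0<\phi_i(\infty)$, while you get it directly from $\sqrt{g_i}\geq g_i'$ --- both are fine).

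There is, however, one genuine slip in your closing remark. You assert that $g_i(x)>0$ on $(0,\infty)$ ``follows from $|\beta_i|\leq 1$,'' but the identity only gives $g_i(x) = (g_i'(x))^2 + \delta_i^2(1-\beta_i^2)$, which is strictly positive for all $x$ only when $|\beta_i|<1$. The set $J$ is defined by $-1<\beta_i\leq 1$, so the boundary case $\beta_i=1$ is admissible; there $g_i(x)=(x/2-\delta_i)^2$ vanishes at $x=2\delta_i$, $\sqrt{g_i}$ is not differentiable at that point, and your differentiation argument for monotonicity and concavity breaks down at exactly that $x$. The conclusion still holds because in that case $\phi_i(x)=\delta_i+x/2-\abs{x/2-\delta_i}=x\wedge(2\delta_i)$, which is trivially concave, nondecreasing, with the stated limits --- this is precisely the separate case the paper's proof disposes of first before assuming $-1<\beta_i<1$. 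You should add that one-line case distinction; with it, your proof is complete.
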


\begin{proof}
	The fact that  $\phi_i(0+) = 0$ is immediate. In the special case $\beta_i = 1$, we have $\phi_i(x) = \delta_i + x/2 - \abs{x / 2 - \delta_i} = x \wedge (2 \delta_i)$ for $x > 0$, and the result is trivial. When $-1 < \beta_i < 1$, $\phi_i$ is twice continuously differentiable, and an easy calculation gives
	\[
	\phi'_i(x) = \frac{1}{2} - \frac{x / 2 - \delta_i \beta_i}{2 \sqrt{\pare{\delta_i + x / 2}^2 - \delta_i ( 1 + \beta_i) x}}, \quad x > 0,
	\]
	from which it immediately follows that $\phi'_i(0+) = (1 + \beta_i)/2$.
	Furthermore, another easy calculation gives
	\[
	\phi''_i(x) = \frac{- 1 + \pare{x / 2 - \delta_i \beta_i}^2 / \pare{\pare{\delta_i + x / 2}^2 - \delta_i ( 1 + \beta_i) x}}{2 \sqrt{\pare{\delta_i + x / 2}^2 - \delta_i ( 1 + \beta_i) x}}, \quad x > 0.
	\]
	Therefore, $\phi''_i(x) < 0$ for all $x > 0$ is equivalent to $\pare{x / 2 - \delta_i \beta_i}^2 < \pare{\delta_i + x / 2}^2 - \delta_i ( 1 + \beta_i) x$ for all $x > 0$. Calculating the squares and cancelling terms, we obtain $\delta_i^2 \beta_i^2 < \delta_i^2$, which is true since $-1 < \beta_i < 1$. Therefore, $\phi_i$ is concave. Continuing, a straightforward calculation gives
	\begin{align*}
	\frac{x}{2} - \sqrt{\pare{\delta_i + x / 2}^2 - \delta_i ( 1 + \beta_i) x} &= \frac{\pare{x /2}^2 - \pare{\pare{\delta_i + x / 2}^2 - \delta_i ( 1 + \beta_i) x}}{ x / 2 + \sqrt{\pare{\delta_i + x / 2}^2 - \delta_i ( 1 + \beta_i) x}} \\
	&= \frac{- \delta_i^2 + \delta_i \beta_i x}{ x / 2 + \sqrt{\pare{\delta_i + x / 2}^2 - \delta_i ( 1 + \beta_i) x}},
	\end{align*}
	which, as $x \to \infty$, has limit $\delta_i \beta_i$. Therefore, $\phi_i(\infty) = \delta_i (1 + \beta_i) > 0$. Since $\phi_i(0) = 0 < \delta_i (1 + \beta_i) = \phi_i(\infty)$ and $\phi_i$ is concave, we conclude that it is nondecrasing.
\end{proof}

Regarding trader $0 \in I$, since $\beta_0 > -1$, in equilibrium we should have
\[
\pare{2 + \frac{\rg_I - \rg_0}{\delta_0} } \frac{\rg_0}{\rg_I} =  1 + \beta_0.
\]
Note that $\rg_I - \rg_0 = \sum_{\iij} \rg_i = \sum_{\iij} \phi_i(\rg_I)$. Therefore, upon defining
\[
\sigma(x) \dfn \sum_{\iij} \phi_i(x), \quad x > 0,
\]
we should have
\[
\pare{2 + \frac{\sigma(\rg_I)}{\delta_0} } \frac{\rg_0}{\rg_I} =  1 + \beta_0,
\]
which immediately gives
\[
\rg_0 =  \frac{\pare{1 + \beta_0} \delta_0}{2 \delta_0 +\sigma(\rg_I) } \rg_I,
\]
Hence, in equilibrium, the following equation should hold for $\rg_I > 0$:
\[
\frac{\pare{1 + \beta_0} \delta_0}{2 \delta_0 +\sigma(\rg_I) } \rg_I + \sigma(\rg_I) = \rg_I.
\]
In other words, at equilibrium $\rg_I$ should solve the equation
\begin{equation} \label{eq:key}
\frac{\pare{1 + \beta_0} \delta_0}{2 \delta_0 +\sigma(x) } + \frac{\sigma(x)}{x}  = 1, \quad x > 0.
\end{equation}
By Lemma \ref{lem:phi_i}, it follows that the left-hand-side of equation \eqref{eq:key} is decreasing in $x > 0$. Its limit at $x = 0+$ is equal to
\[
\frac{1 + \beta_0}{2} + \sum_{\iij} \frac{1 + \beta_i}{2} = \frac{|J_0|}{2} + \frac{1}{2} \sum_{\iij_0} \beta_i.
\]
Since $|J_0| \geq 2$ (recall that $J \neq \emptyset$) and $\sum_{\iij_0} \beta_i \geq 1$ (by definition of $J$ and the fact that $\beta_I = 1$), the above limit is strictly greater than one. It follows that \eqref{eq:key} will have a (necessarily unique) solution if and only if the limit as $x \to \infty$ of the left-hand-side of \eqref{eq:key} is strictly less than one. In other words, and since $\sigma (\infty) = \sum_{\iij} \pare{1 + \beta_i} \delta_i$, it should hold that
\[
\pare{1 + \beta_0} \delta_0  < 2 \delta_0 +\sigma(\infty) = 2 \delta_0 + \sum_{\iij} \pare{1 + \beta_i} \delta_i ,
\]
which is exactly \eqref{eq:cond}.

The above discussion implies that a unique Nash equilibrium exists under the validity of \eqref{eq:beta_more_one_up_to_one} and failure of \eqref{eq.condition_for_extreme_bis}, completing the proof of Theorem \ref{thm:ex_and_un}.

\bibliographystyle{amsalpha}
\bibliography{references}
\end{document}